\documentclass[11pt,onecolumn,twoside]{IEEEtran} 
\usepackage{graphicx}
\usepackage{amsmath}
\usepackage{mathrsfs}
\usepackage{mathtools}
\usepackage{amssymb}
\usepackage{float}
\usepackage{url}
\usepackage{verbatim}
\usepackage{amsthm}
\usepackage{enumerate}
\usepackage{layout}
\usepackage{bm}





\newtheorem{theorem}{Theorem}
\newtheorem{lemma}{Lemma}

\newtheorem{corollary}{Corollary}
\newtheorem{definition}{Definition}
\theoremstyle{definition}

\newcommand{\naturals}{\ensuremath{\mathbb{N}}}
\newcommand{\Reals}{\ensuremath{\mathbb{R}}}
\newcommand{\expectation}{\ensuremath{\mathbb{E}}}
\newcommand{\Var}{\mathrm{Var}}

\newcommand{\set}{\ensuremath{\mathcal}}

\newcommand{\dif}{\mathrm{d}}
\DeclareMathOperator*{\esssup}{ess\,sup}

\begin{document}
\thispagestyle{empty}
\setcounter{page}{1}
\setlength{\baselineskip}{1.15\baselineskip}

\title{\huge{Tight Lower Bounds for $\alpha$-Divergences Under Moment Constraints and Relations Between Different $\alpha$}\\[0.2cm]}
\author{Tomohiro Nishiyama\\ Email: htam0ybboh@gmail.com}
\date{}
\maketitle
\thispagestyle{empty}

\begin{abstract}
The $\alpha$-divergences include the well-known Kullback-Leibler divergence, Hellinger distance and $\chi^2$-divergence.
In this paper, we derive differential and integral relations between the $\alpha$-divergences that are generalizations of the relation between the Kullback-Leibler divergence and the $\chi^2$-divergence. We also show tight lower bounds for the $\alpha$-divergences under given means and variances. In particular, we show a necessary and sufficient condition such that the binary divergences, which are divergences between probability measures on the same $2$-point set, always attain lower bounds. Kullback-Leibler divergence, Hellinger distance, and $\chi^2$-divergence satisfy this condition.
\end{abstract}
\noindent \textbf{Keywords:} Alpha-divergence,  Kullback-Leibler divergence, Hellinger distance, Chi-squared divergence, Relative entropy, Renyi divergence.
 
\section{Introduction}
The Kullback–Leibler divergence~\cite{kullback1951information} (also known as the relative entropy) and the Hellinger distance~\cite{hellinger1909neue} are divergence measures which play a key role in information theory, statistics, machine learning, physics, signal processing, and other theoretical and applied branches of mathematics.
They both belong to an important class of divergence measures, defined by means of convex functions $f$, and named $f$ -divergences~\cite{csiszar1967information,csiszar1967topological,csiszar1972class}.
The most notable class of $f$-divergence is the $\alpha$-divergence~\cite{amari2016information, cichocki2010families}. By choosing different $\alpha$, we get a large number of well-known divergences as special cases, including Kullback-Leibler divergence, Hellinger distance, and $\chi^2$-divergence~\cite{pearson1900x}.

In this paper, we study relations between the $\alpha$-divergences for different $\alpha$, and derive the tight lower bounds for the $\alpha$-divergences under given means and variances.
The relation between the Kullback-Leibler divergence and the $\chi^2$-divergence was shown in ~\cite{8919677,audenaert2014quantum, nishiyama2020relations}, and we generalize this relation for general $\alpha$ and $\alpha+1$. Regarding the lower bounds under given means and variances, there are some works for the $\chi^2$-divergence and the Hellinger distance~\cite{chapman1951minimum, dashti2013bayesian, katsoulakis2017scalable}. Recently, for the Kullback-Leibler divergence and the Hellinger distance, we showed that the tight lower bounds are all attained by their binary divergences that are divergences between probability measures on the same $2$-point set~\cite{nishiyama2020relations, nishiyama2020tight}.
Our motivation is to study necessary and sufficient conditions for $\alpha$ such that the binary $\alpha$-divergences always attain lower bounds. Furthermore, we show tight lower bounds under given means and variances for the R\'{e}nyi divergences~\cite{renyi1961measures}, which are closely related to the $\alpha$-divergences.

In this work, Section 2 presents notation and definitions, Section 3 refers to the main results, Section 4 shows the proofs of the main results. 
Finally, Section 5 concludes this paper, and lemmas that are necessary for the proofs of the main results are proved in Appendices.

\section{Preliminaries}

This section provides definitions of divergence measures which are used in this paper.
\begin{definition} {\rm \label{def:fD} \cite[p.~4398]{liese2006divergences}}
Let $P$ and $Q$ be probability measures, let $\mu$ be a dominating measure
of $P$ and $Q$ (i.e., $P, Q \ll \mu$), and let $p := \frac{\mathrm{d}P}{\mathrm{d}\mu}$
and $q := \frac{\mathrm{d}Q}{\mathrm{d}\mu}$ be the densities of $P$ and $Q$ with respect
to $\mu$. The {\em $f$-divergence} from $P$ to $Q$ is given by
\begin{align} \label{eq:fD}
D_f(P\|Q) := \int q \, f \Bigl(\frac{p}{q}\Bigr) \, \mathrm{d}\mu,
\end{align}
where
\begin{align}
& f(0) := \underset{t \to 0^+}{\lim} \, f(t), \quad  0 f\biggl(\frac{0}{0}\biggr) := 0, \nonumber \\[0.1cm] 
& 0 f\biggl(\frac{a}{0}\biggr) \nonumber
:= \lim_{t \to 0^+} \, t f\biggl(\frac{a}{t}\biggr)
= a \lim_{u \to \infty} \frac{f(u)}{u}, \quad a>0.
\end{align}
It should be noted that the right side of \eqref{eq:fD} is invariant
in the dominating measure $\mu$.
\end{definition}
\begin{definition}  {\rm \label{def:alpha-div} \cite{cichocki2010families}} 
The basic {\em asymmetric alpha-divergence} is the $f$-divergence with 
\begin{align}
f(t) := 
\begin{dcases} 
\frac{t^\alpha-t}{\alpha(\alpha-1)}, & \alpha\neq 0,1, \nonumber \\
-\log t, & \alpha=0, \nonumber \\  
t \log t, & \alpha=1. \nonumber 
\end{dcases}
\end{align}
for $t >0$,
\begin{align}
\label{eq_alpha_div}
D_A^{(\alpha)}(P\|Q) := 
\begin{dcases}
\frac{1}{\alpha(\alpha-1)}\Bigl(\int p^\alpha q^{1-\alpha} \mathrm{d}\mu -1\Bigr), & \alpha\neq 0,1, \\ 
\int q\log\frac{q}{p} \mathrm{d}\mu=:D(Q\|P), & \alpha=0, \\
\int p\log\frac{p}{q} \mathrm{d}\mu=:D(P\|Q), & \alpha=1,
\end{dcases}
\end{align}
where $D(P\|Q)$ denotes the Kullback-Leibler divergence (relative entropy).
\end{definition}
In the special case for $\alpha=2,\; 0.5,\; -1$, we obtain from \eqref{eq_alpha_div} the well known Pearson Chi-square,
Hellinger and Neyman Chi-square distances, given respectively by
\begin{align}
D_A^{(2)}(P\|Q)=\frac{1}{2}\chi^2_P(P\|Q)=\frac{1}{2} \int \frac{(p-q)^2}{q} \mathrm{d}\mu,  \nonumber\\ 
D_A^{(1/2)}(P\|Q)=4H^2(P,Q)=2\int (\sqrt{p}-\sqrt{q})^2 \mathrm{d}\mu, \nonumber \\  
D_A^{(-1)}(P\|Q)=\frac12\chi^2_N(P\|Q)=\frac12 \int \frac{(p-q)^2}{p} \mathrm{d}\mu.   \nonumber
\end{align}

The $\alpha$-divergences have duality as follows. \\
\textbf{Duality:} \\
\begin{align}
\label{eq-duality}
D_A^{(\alpha)}(P\|Q)=D_A^{(1-\alpha)}(Q\|P).
\end{align}
The R\'{e}nyi divergences~\cite{renyi1961measures} are closely related to $\alpha$-divergences. 
\begin{definition}
\label{def:renyi-div}
The {\em R\'{e}nyi divergence} for the simple orders $\alpha\in (0,1)\cup (1,\infty)$ is defined as
\begin{align}
D_R^{(\alpha)}(P\|Q):= \frac{1}{\alpha-1}\log\int p^\alpha q^{1-\alpha} \mathrm{d}\mu=\frac{1}{\alpha-1}\log\Bigl(1+\alpha(\alpha-1)D_A^{(\alpha)}(P\|Q)\Bigr). \nonumber
\end{align}
For the extended orders, the {\em R\'{e}nyi divergence} is defined as
\begin{align}
D_R^{(\alpha)}(P\|Q) := 
\begin{dcases}
-\log Q(p>0), & \alpha=0, \nonumber \\ 
D(P\|Q)=D_A^{(1)}(P\|Q), & \alpha=1, \\
\log\esssup \frac{p(Z)}{q(Z)} , & \alpha=\infty, 
\end{dcases}
\end{align}
where $Z\sim \mu$.
\end{definition}
\begin{definition} \label{def: probability_set}
Let us define a set of pairs of probability measures $(P,Q)$ defined on $n$-point set $\{u_1, u_2, \cdots, u_n\}$ by $\set{P}_n$, where $\{u_i\}_{1\leq i\leq n}$ are arbitrary real numbers.
\end{definition}
If $m< n$, $\set{P}_m$ is a subset of $\set{P}_n$.
\begin{definition} 
Let $P$ and $Q$ be probability measures defined on a measurable space $(\Reals, \mathscr{B})$, where $\Reals$ is the real line and $\mathscr{B}$ is the Borel $\sigma$-algebra of subsets of 
$\Reals$.
Let $\set{P}[m_P, \sigma_P, m_Q, \sigma_Q]$ be a set of pairs of probability measures $(P,Q)$ under given means and variances, i.e.,
\begin{align}
\label{constraints}
& \expectation[X] =: m_P, \; \expectation[Y] =: m_Q,
\quad \Var(X) =: \sigma_P^2, \;  \Var(Y) =: \sigma_Q^2,
\end{align}
where $X\sim P$ and $Y\sim Q$.
\end{definition}

\begin{definition} \label{def:binary alpha div}
The {\em binary $\alpha$-divergence} is defined for $(R,S)\in \set{P}_2$.
\begin{align}
\label{eq-binary alpha div}
d_A^{(\alpha)}(r\|s) :=
\begin{dcases}
\frac{1}{\alpha(\alpha-1)} \Bigl(r^\alpha s^{1-\alpha}+(1-r)^\alpha (1-s)^{1-\alpha}-1\Bigr), & \alpha\neq 0,1, \\ 
s\log\frac{s}{r}+(1-s)\log\frac{1-s}{1-r}, & \alpha=0, \\
r\log\frac{r}{s}+(1-r)\log\frac{1-r}{1-s}, & \alpha=1. 
\end{dcases}
\end{align}
where $R(u_1) = r$ and $S(u_1) = s$.
\end{definition}

\begin{definition} 
The {\em binary R\'{e}nyi divergence} for orders $\alpha\in [0,\infty]$ is defined for $(R,S)\in \set{P}_2$.
\begin{align}
d_R^{(\alpha)}(r\|s) := 
\begin{dcases}
\frac{1}{\alpha-1}\log\Bigl(r^\alpha s^{1-\alpha} + (1-r)^\alpha(1-s)^{1-\alpha}\Bigr), & \alpha\neq 0,1, \nonumber \\
-\log\Bigl(s1\{r> 0\} + (1-s)1\{1-r> 0\}\Bigr), & \alpha=0, \nonumber \\ 
r\log\frac{r}{s}+(1-r)\log\frac{1-r}{1-s}, & \alpha=1, \nonumber \\
\log\max\Bigl(\frac{r}{s}, \frac{1-r}{1-s}\Bigr), & \alpha=\infty.
\end{dcases}
\end{align}
where $1\{\mbox{relation}\}$ denotes the indicator function. 
\end{definition}

\section{Main results}
\begin{theorem} \label{theorem_relation_alpha}
Let $Q_t:= (Q-P)t+P$ and $P_t:= (P-Q)t+Q=Q_{1-t}$ for $t\in[0,1]$.
Then, for all $t\in(0,1]$, 
\begin{align}
\label{relation_alpha_1}
D_A^{(\alpha+1)}(P\|Q_t)=\frac{t^{2-\alpha}}{(\alpha+1)}\frac{\dif}{\dif t}\Bigl(t^{\alpha-1}D_A^{(\alpha)}(P\|Q_t)\Bigr), \quad \alpha\neq -1, \\[0.1cm]
\label{relation_alpha_2}
D_A^{(\alpha)}(P_t\|Q)=\frac{t^{2+\alpha}}{(-\alpha+1)}\frac{\dif}{\dif t}\Bigl(t^{-\alpha-1}D_A^{(\alpha+1)}(P_t\|Q)\Bigr), \quad \alpha\neq 1. 
\end{align}
\end{theorem}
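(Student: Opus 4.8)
The plan is to reduce both identities to a single elementary computation: differentiating under the integral sign the power integral that underlies $D_A^{(\alpha)}$. Writing $p := \mathrm{d}P/\mathrm{d}\mu$ and $q := \mathrm{d}Q/\mathrm{d}\mu$, the density of $Q_t$ is $q_t = (1-t)p + tq$, and for $\alpha \neq 0,1$ the divergence is governed by $I_\alpha(t) := \int p^\alpha q_t^{1-\alpha}\,\mathrm{d}\mu$ via $D_A^{(\alpha)}(P\|Q_t) = \frac{1}{\alpha(\alpha-1)}(I_\alpha(t)-1)$. First I would record the key identity $\partial_t q_t = q - p = (q_t - p)/t$, valid for $t \in (0,1]$, which is the device that converts the $t$-derivative into a raising of the exponent.

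Differentiating under the integral then gives $I_\alpha'(t) = (1-\alpha)\int p^\alpha q_t^{-\alpha}(q-p)\,\mathrm{d}\mu = \frac{1-\alpha}{t}\bigl(I_\alpha(t) - I_{\alpha+1}(t)\bigr)$, where substituting $q-p = (q_t-p)/t$ turns $\int p^\alpha q_t^{-\alpha}(q_t - p)\,\mathrm{d}\mu$ into $I_\alpha(t) - I_{\alpha+1}(t)$. Plugging this into $\frac{\mathrm{d}}{\mathrm{d}t}\bigl(t^{\alpha-1}D_A^{(\alpha)}(P\|Q_t)\bigr) = \frac{1}{\alpha(\alpha-1)}\bigl[(\alpha-1)t^{\alpha-2}(I_\alpha-1) + t^{\alpha-1}I_\alpha'\bigr]$ and simplifying, the two terms carrying $I_\alpha(t)$ cancel and I am left with $\frac{t^{\alpha-2}}{\alpha}(I_{\alpha+1}(t)-1)$. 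Multiplying by $t^{2-\alpha}/(\alpha+1)$ yields exactly $\frac{1}{\alpha(\alpha+1)}(I_{\alpha+1}(t)-1) = D_A^{(\alpha+1)}(P\|Q_t)$, which is \eqref{relation_alpha_1}; the hypothesis $\alpha\neq -1$ is precisely what legitimizes the prefactor $1/(\alpha+1)$ and the exponent $\alpha+1$.

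For \eqref{relation_alpha_2} I would run the mirror-image computation with $p_t = (1-t)q + tp$, $J_\beta(t) := \int p_t^\beta q^{1-\beta}\,\mathrm{d}\mu$, and the analogous identity $p-q = (p_t-q)/t$, obtaining $J_\beta'(t) = \frac{\beta}{t}(J_\beta - J_{\beta-1})$ and, after the same cancellation, \eqref{relation_alpha_2}. Alternatively, \eqref{relation_alpha_2} follows from \eqref{relation_alpha_1} through the duality $D_A^{(\alpha)}(P\|Q) = D_A^{(1-\alpha)}(Q\|P)$ of \eqref{eq-duality}: since $P_t = (P-Q)t + Q$ is the interpolation from $Q$ to $P$, applying \eqref{relation_alpha_1} to $D_A^{(\cdot)}(Q\|P_t)$ with parameter $1-\alpha$, translating back by duality, and then reindexing $\alpha \mapsto \alpha+1$ reproduces the second identity.

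The routine algebra above is not where the difficulty lies; the main obstacle is analytic and twofold. First, the interchange of $\frac{\mathrm{d}}{\mathrm{d}t}$ and $\int$ defining $I_\alpha'$ must be justified, since $q_t$ can degenerate on $\{q=0\}$ as $t\to 0$ while the integrand involves negative powers of $q_t$; I would fix $t$ in a compact subinterval $[\varepsilon,1]$, bound $q_t \geq (1-t)p$ there, and invoke dominated convergence, working in the regime where the divergences in play are finite. Second, the generic computation excludes $\alpha\in\{0,1\}$, where the logarithmic forms of $D_A^{(0)}$ and $D_A^{(1)}$ intervene. These I would dispatch either by direct substitution—e.g.\ at $\alpha=1$ one checks $\frac{t}{2}\frac{\mathrm{d}}{\mathrm{d}t}D(P\|Q_t) = \frac12\bigl(\int p^2/q_t\,\mathrm{d}\mu - 1\bigr) = D_A^{(2)}(P\|Q_t)$—or, more uniformly, by using the continuity of $\alpha\mapsto D_A^{(\alpha)}(P\|Q_t)$ and of its $t$-derivative (locally uniform in $t$), so that the identity, once proved off the exceptional set, extends by taking limits in $\alpha$.
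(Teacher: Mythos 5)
Your computation is correct and follows essentially the same route as the paper's own proof: differentiate under the integral sign and use the identity $q_t - t(q-p) = p$ (your $(q_t-p)/t = q-p$) to raise $\alpha$ to $\alpha+1$, handle $\alpha=0,1$ by direct substitution of the logarithmic forms, and obtain \eqref{relation_alpha_2} by duality (or the mirror computation). Your additional remarks on justifying the interchange of derivative and integral address a regularity point the paper leaves implicit, but they do not alter the argument.
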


\begin{proof}
See Section~\ref{section: proofs}.
\end{proof}

\begin{corollary} \label{cor_relation_alpha}
For all $t\in(0,1]$, 
\begin{align}
\label{relation_alpha_3}
D_A^{(\alpha)}(P\|Q_t)=(\alpha+1)t^{1-\alpha}\int_0^t s^{\alpha-2}D_A^{(\alpha+1)}(P\|Q_s)\mathrm{d}s, \quad \alpha> -1, \\[0.1cm]
\label{relation_alpha_4}
D_A^{(\alpha+1)}(P_t\|Q)=(-\alpha+1)t^{1+\alpha}\int_0^t s^{-\alpha-2}D_A^{(\alpha)}(P_s\|Q)\mathrm{d}s, \quad \alpha< 1. 
\end{align}
\end{corollary}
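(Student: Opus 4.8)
The plan is to recognize that each identity in Corollary~\ref{cor_relation_alpha} is nothing but the integrated form of the corresponding differential relation in Theorem~\ref{theorem_relation_alpha}, so the whole argument reduces to an application of the fundamental theorem of calculus together with a careful treatment of the boundary behaviour at $t=0$.

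First I would rewrite \eqref{relation_alpha_1} as a clean first-order ODE. Setting $G(t):=t^{\alpha-1}D_A^{(\alpha)}(P\|Q_t)$, the theorem says exactly that
\begin{align}
G'(t)=(\alpha+1)\,t^{\alpha-2}D_A^{(\alpha+1)}(P\|Q_t),\qquad t\in(0,1]. \nonumber
\end{align}
Integrating over $[\epsilon,t]$ gives $G(t)-G(\epsilon)=(\alpha+1)\int_\epsilon^t s^{\alpha-2}D_A^{(\alpha+1)}(P\|Q_s)\,\mathrm{d}s$. Letting $\epsilon\to 0^+$ and then multiplying through by $t^{1-\alpha}$ yields \eqref{relation_alpha_3}, provided that $G(\epsilon)\to 0$ and that the improper integral converges. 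So everything reduces to controlling these two quantities near $t=0$.

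The crux — and the step I expect to be the main obstacle — is the asymptotics of $D_A^{(\beta)}(P\|Q_t)$ as $t\to 0^+$. Since $Q_0=P$, one expects quadratic vanishing: writing $q_t=p+t(q-p)$ and Taylor-expanding the integrand of \eqref{eq_alpha_div},
\begin{align}
p^\beta q_t^{1-\beta}=p+ (1-\beta)\,t(q-p)+\tfrac{1}{2}\beta(\beta-1)\,t^2\tfrac{(q-p)^2}{p}+o(t^2), \nonumber
\end{align}
and integrating (the first two terms contribute $1$ and $0$) gives $D_A^{(\beta)}(P\|Q_t)=\tfrac{t^2}{2}\int\frac{(q-p)^2}{p}\,\mathrm{d}\mu+o(t^2)=O(t^2)$. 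Making this rigorous requires justifying the interchange of limit and integral (e.g.\ via dominated convergence under the standing finiteness assumptions on the divergences), which is the only genuinely delicate point; the special orders $\beta\in\{0,1\}$ are handled by the analogous expansion of the logarithmic integrands. Granting the estimate, $G(\epsilon)=\epsilon^{\alpha-1}O(\epsilon^2)=O(\epsilon^{\alpha+1})\to 0$ because $\alpha>-1$, and the integrand behaves like $s^{\alpha-2}O(s^2)=O(s^{\alpha})$, which is integrable on $(0,t]$ exactly because $\alpha>-1$. Thus both boundary conditions are supplied precisely by the hypothesis of \eqref{relation_alpha_3}.

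Finally, \eqref{relation_alpha_4} follows by the identical argument applied to \eqref{relation_alpha_2}: with $H(t):=t^{-\alpha-1}D_A^{(\alpha+1)}(P_t\|Q)$ one has $H'(t)=(1-\alpha)\,t^{-\alpha-2}D_A^{(\alpha)}(P_t\|Q)$, and since $P_0=Q$ the same quadratic estimate gives $H(\epsilon)=O(\epsilon^{1-\alpha})\to 0$ and an integrand of order $s^{-\alpha}$, both controlled by the hypothesis $\alpha<1$. Integrating and multiplying by $t^{\alpha+1}$ produces \eqref{relation_alpha_4}. (Alternatively one could deduce the second identity from the first via the duality \eqref{eq-duality}, but carrying out the argument directly on \eqref{relation_alpha_2} avoids reindexing.)
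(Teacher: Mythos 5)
Your proposal is correct and follows essentially the same route as the paper: both integrate the differential relation \eqref{relation_alpha_1} (resp.\ \eqref{relation_alpha_2}) from $0$ to $t$ and use a second-order Taylor expansion at $t=0$ to show that $D_A^{(\beta)}(P\|Q_t)=O(t^2)$, so that the boundary term $t^{\alpha-1}D_A^{(\alpha)}(P\|Q_t)=O(t^{\alpha+1})$ vanishes and the integrand $s^{\alpha-2}D_A^{(\alpha+1)}(P\|Q_s)=O(s^{\alpha})$ is integrable precisely when $\alpha>-1$. Your treatment is somewhat more explicit about the $\epsilon\to 0^+$ limit and the convergence of the improper integral, but the underlying argument is identical.
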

The relation \eqref{relation_alpha_3} for $\alpha =1$ yields the relation between the Kullback-Leibler divergence and the $\chi^2$-divergence.
\begin{align}
D(P\|Q_t)=\int_0^t s^{-1}\chi_P^2(P\|Q_s)\mathrm{d}s.
\end{align}

\begin{proof}
By the Taylor expansion for a differentiable function $f(\cdot)$ such that $f(1)=0$, for efficiently small $s$, we obtain
\begin{align}
\label{cor_1}
\int q_sf\Bigl(\frac{p}{q_s}\Bigr)\mathrm{d}\mu=\int q_s \Bigl(f'(1)\Bigl(\frac{p}{q_s}-1\Bigr)+\frac{f''(1)}{2}{\Bigl(\frac{p}{q_s}-1\Bigr)}^2 +O(s^3)\Bigr) \mathrm{d}\mu = \frac{f''(1)s^2}{2} \chi^2_P(P\|Q)+O(s^3).
\end{align}
Since the $\alpha$-divergences belong to $f$-divergences, it follows that
\begin{align}
\label{cor_2}
t^{\alpha-1}D_A^{(\alpha)}(P\|Q_t)=O(t^{\alpha+1}).
\end{align}
Replacing $t$ by $s$, multiplying $(\alpha + 1)s^{\alpha-2}$ and integrating both sides of \eqref{relation_alpha_1}, we obtain \eqref{relation_alpha_3}. The condition $\alpha>-1$ is due to \eqref{cor_2}. The equality \eqref{relation_alpha_4} follows in a similar way.
 \end{proof}
  
\begin{theorem} \label{theorem_lower_bound}
Let $(P, Q)\in \set{P}[m_P, \sigma_P, m_Q, \sigma_Q]$.
\begin{enumerate}[(a)]
\item
If $m_P\neq m_Q$, the binary $\alpha$-divergence always attains a lower bound under given means and variances if and only if $\alpha\in[-1,2]$.
\begin{align} \label{lb_alpha}
D_A^{(\alpha)}(P\|Q) \geq d_A^{(\alpha)}(r\|s), 
\end{align}
where 
\begin{align}
\label{r}
& r := \frac12 + \frac{\sigma_Q^2-\sigma_P^2+a^2}{4av} \in [0,1], \\
\label{s}
& s :=  \frac12 + \frac{\sigma_Q^2-\sigma_P^2-a^2}{4av}\in [0,1], \\
\label{a}
& a:= m_P - m_Q, \\
\label{v}
& v:= \frac{1}{2|a|}\sqrt{(\sigma_Q^2-\sigma_P^2)^2 + 2a^2(\sigma_P^2+\sigma_Q^2)+a^4}.
\end{align}

\item
The lower bound in the right side of \eqref{lb_alpha} is attained for $(R,S)\in \set{P}_2\cap \set{P}[m_P, \sigma_P, m_Q, \sigma_Q]$ defined on $\{u_1, u_2\}$, and
\begin{align} \label{eq_binary_prob}
R(u_1) = r, \quad S(u_1) = s,
\end{align}
with $r$ and $s$ in \eqref{r} and \eqref{s}, respectively, and
\begin{align} \label{vec_u_1,2}
& u_1 :=  m_P + \sqrt{\frac{(1-r) \sigma_P^2}{r}},
\quad u_2 := m_P - \sqrt{\frac{r \sigma_P^2}{1-r}}.
\end{align}

\item
If $m_P=m_Q$, then,
\begin{align} \label{mean_equal}
\inf_{(P,Q)\in\set{P}[m_P,\sigma_P, m_Q, \sigma_Q]} D_A^{(\alpha)}(P\|Q)=0.
\end{align}
\end{enumerate}
\end{theorem}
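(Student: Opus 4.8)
The plan is to prove the three parts by different means: (b) and (c) by explicit construction, and (a) by a duality (supporting-quadratic) argument that isolates the role of the range $\alpha\in[-1,2]$. For the achievability, part (b), I would first check that the explicit pair $(R,S)$ is feasible and attains the stated value. Setting $d_i:=u_i-m_P$, the stated choice of $u_1,u_2$ is exactly the one for which $r d_1+(1-r)d_2=0$ and $r d_1^2+(1-r)d_2^2=\sigma_P^2$, i.e.\ $R$ has mean $m_P$ and variance $\sigma_P^2$. Requiring that $S$, supported on the same two points with weight $s$ at $u_1$, have mean $m_Q$ and variance $\sigma_Q^2$ yields the two relations $(r-s)\sigma_P=a\sqrt{r(1-r)}$ and $\sigma_Q^2\,r(1-r)=\sigma_P^2\,s(1-s)$; solving this pair under $r,s\in[0,1]$ reproduces \eqref{r}, \eqref{s}, \eqref{v} exactly, after which $D_A^{(\alpha)}(R\|S)=d_A^{(\alpha)}(r\|s)$ holds by definition. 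So part (b) reduces to this (routine) algebra, and it simultaneously pins down the two optimal likelihood ratios $\ell_1:=r/s$ at $u_1$ and $\ell_2:=(1-r)/(1-s)$ at $u_2$.

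For the lower bound, part (a), ``if'' direction: since $D_A^{(\alpha)}$ is an $f$-divergence with $f(t)=\frac{t^\alpha-t}{\alpha(\alpha-1)}$ and $f''(t)=t^{\alpha-2}$, I would look for quadratics $A(x),B(x)$ obeying the pointwise bound $f(\ell)\ge A(x)+\ell B(x)$ for all $x\in\Reals$ and $\ell>0$. Multiplying by $q\ge0$, substituting $\ell=p/q$, and integrating replaces the right-hand side by $\expectation_Q[A]+\expectation_P[B]$, a constant determined solely by the prescribed means and variances. I would fix the six coefficients of $A,B$ by imposing contact with $f$ at the two optimal points: tangency in $\ell$ ($B(u_i)=f'(\ell_i)$) together with stationarity in $x$ ($A'(u_i)+\ell_i B'(u_i)=0$) and matching values. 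Then equality holds on the support of $(R,S)$, so evaluating on $(R,S)$ — which shares the moments — forces $\expectation_Q[A]+\expectation_P[B]=d_A^{(\alpha)}(r\|s)$. Taking the infimum over $\ell$ first, the pointwise bound collapses to the single-variable inequality $A(x)\le\Phi(B(x))$ for all $x$, where $\Phi(y):=-f^{*}(y)=-\tfrac{1}{\alpha}\bigl((\alpha-1)y+\tfrac1\alpha\bigr)^{\alpha/(\alpha-1)}$ is the (concave) negative conjugate, with double contact prescribed at $u_1,u_2$.

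The hard step is to prove that this ``quadratic $\le$ power-of-a-quadratic'' inequality, normalized to have double zeros at $u_1,u_2$, actually holds for every $x$, and that it does so precisely when $\alpha\in[-1,2]$. I expect this to reduce to controlling the sign of $\psi'':=\Phi''(B)(B')^2+\Phi'(B)B''-A''=-\tau^{2-\alpha}(B')^2-\tau B''-A''$ (with $\tau=\tau(B(x))>0$ and $f'(\tau)=B$), the threshold being dictated by the curvature exponent $2-\alpha$; a useful consistency check is that the window $[-1,2]$ is invariant under the duality $\alpha\mapsto1-\alpha$ of \eqref{eq-duality}, its endpoints being the Pearson and Neyman $\chi^2$ cases. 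For the converse, when $\alpha\notin[-1,2]$ I would show that no such certificate can exist by exhibiting a moment-preserving three-point perturbation of $(R,S)$ whose second-order effect on $D_A^{(\alpha)}$ is strictly negative, so that $d_A^{(\alpha)}(r\|s)$ fails to be a lower bound.

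Finally, for part (c), when $m_P=m_Q=:m$, nonnegativity of the $\alpha$-divergence already gives the lower bound $0$, so only a matching construction is required. I would place common mass $1-2\beta$ (for $Q$) and $1-2\gamma$ (for $P$) at $m$, and masses $\beta,\gamma$ at each of $m\pm M$, choosing $\beta=\sigma_Q^2/(2M^2)$ and $\gamma=\sigma_P^2/(2M^2)$ so that the variances are exactly $\sigma_Q^2,\sigma_P^2$ while both means stay at $m$. Since the three likelihood ratios then tend to finite positive limits as $M\to\infty$ while the two tail masses vanish, one gets $\int p^\alpha q^{1-\alpha}\,\dif\mu\to1$ (and the analogous statement in the logarithmic cases $\alpha=0,1$), whence $D_A^{(\alpha)}(P\|Q)\to0$ and the infimum is $0$. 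This construction uses no convexity and works for all $\alpha$; the genuine analytic obstacle is confined to establishing the global certificate inequality and the sharp endpoints of part (a).
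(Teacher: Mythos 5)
Your outline for parts (b) and (c) is sound and close in spirit to what the paper does (part (b) is exactly the content of Lemma~\ref{lem_p2}, and your four-point construction for part (c) is a harmless variant of the paper's), but part (a) has a genuine gap on both directions. For sufficiency, your entire argument rests on the global validity of the two-quadratic certificate $f(\ell)\ge A(x)+\ell B(x)$ with double contact at $u_1,u_2$, and you explicitly defer this (``the hard step is to prove that this \dots inequality actually holds for every $x$, and that it does so precisely when $\alpha\in[-1,2]$''). That verification \emph{is} the theorem: you have six coefficients matched to six contact conditions, but nothing in your sketch shows the resulting inequality $A(x)\le\Phi(B(x))$ holds for all $x\in\Reals$ rather than merely locally near $u_1,u_2$, nor why the threshold is exactly $\alpha=-1,2$. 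The paper takes a completely different and fully worked route: a Lagrangian analysis (Lemma~\ref{minimum_condition}) showing the constrained minimizer over bounded finite supports is binary for $0<\alpha<1$, monotonicity of $d_A^{(\alpha)}(r\|s)$ in the variances (Lemma~\ref{decreasing}) to absorb mass escaping to infinity, a compactness/approximation step, and then the integral relation \eqref{relation_alpha_3} to transfer the result from $(0,1)$ to $[-1,0]\cup[1,2]$ since $Q_t$ and $S_t$ share moments, with the endpoints $\alpha=\pm 1,0,2$ handled by Hammersley--Chapman--Robbins and duality. If a quadratic certificate exists for all $\alpha\in[-1,2]$ it would be an elegant alternative, but as written you have reduced the theorem to an unproven inequality of comparable difficulty.

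For necessity, your proposed ``moment-preserving three-point perturbation of $(R,S)$ with strictly negative second-order effect'' is unlikely to work as stated: adding a third atom with small masses $(\epsilon_p,\epsilon_q)$ contributes $\epsilon_p^\alpha\epsilon_q^{1-\alpha}$ to $\int p^\alpha q^{1-\alpha}\,\dif\mu$, which is not a second-order quantity in any natural parameter, and for fixed bounded support the binary pair is in fact locally optimal in the regimes where the Lagrangian analysis applies. The actual mechanism in the paper is different: for $\alpha<-1$ the candidate bound $d_A^{(\alpha)}(r\|s)$ is \emph{increasing} in $\sigma_Q$ near $\sigma_Q=0$ (Lemma~\ref{exception}), so one sends a third atom of $Q$ to infinity carrying all of $Q$'s variance, with $P$-mass decaying just fast enough (order $u_3^{-(2+\delta)}$ with $2+\alpha\delta>0$) that the cross term vanishes; the divergence then converges to $d_A^{(\alpha)}(r'\|s')$ computed at $\sigma_Q=0$, which is strictly below $d_A^{(\alpha)}(r\|s)$. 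Your sketch does not identify this escape-to-infinity mechanism, and without it the counterexample does not materialize.
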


\begin{proof}
See Section~\ref{section: proofs}.
\end{proof}
For the Hellinger distance and the $\chi^2$-divergence, their binary divergences are simplified as 
\begin{align}
d_A^{(2)}(r\|s)&=\frac{a^2}{2\sigma_Q^2}, \nonumber \\
d_A^{(1/2)}(r\|s)&=4\Bigl(1-\sqrt{\frac{(\sigma_P+\sigma_Q)^2}{a^2+(\sigma_P+\sigma_Q)^2}}\Bigr). \nonumber 
\end{align}
See Subsection~\ref{sub_sec:minimum_condition} and~\cite{nishiyama2020tight}[Lemma 2], respectively.

\begin{corollary}
If $m_P\neq m_Q$ and $\alpha\in[0,2]$, the binary R\'{e}nyi divergence always attains a lower bound under given means and variances.
\end{corollary}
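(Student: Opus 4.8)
The plan is to exploit the fact that, for a fixed order $\alpha$, the R\'{e}nyi divergence is obtained from the $\alpha$-divergence by a single scalar transformation, and that the \emph{same} transformation sends the binary $\alpha$-divergence to the binary R\'{e}nyi divergence. Concretely, for $\alpha\neq 0,1$, Definition~\ref{def:renyi-div} gives $D_R^{(\alpha)}(P\|Q)=g_\alpha\bigl(D_A^{(\alpha)}(P\|Q)\bigr)$ with $g_\alpha(x):=\frac{1}{\alpha-1}\log\bigl(1+\alpha(\alpha-1)x\bigr)$. From \eqref{eq-binary alpha div} one has $r^\alpha s^{1-\alpha}+(1-r)^\alpha(1-s)^{1-\alpha}=1+\alpha(\alpha-1)\,d_A^{(\alpha)}(r\|s)$, and substituting this into the definition of the binary R\'{e}nyi divergence shows likewise that $d_R^{(\alpha)}(r\|s)=g_\alpha\bigl(d_A^{(\alpha)}(r\|s)\bigr)$. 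Thus both the full and the binary R\'{e}nyi divergences are the image under one common map $g_\alpha$ of their $\alpha$-divergence counterparts.

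First I would establish the monotonicity of $g_\alpha$. Differentiating gives $g_\alpha'(x)=\frac{\alpha}{1+\alpha(\alpha-1)x}$, and the denominator equals $\int p^\alpha q^{1-\alpha}\,\mathrm{d}\mu>0$ (respectively the positive quantity inside the logarithm in the binary case), so $g_\alpha$ is strictly increasing precisely when $\alpha>0$. This is exactly why the admissible range shrinks from $[-1,2]$ in Theorem~\ref{theorem_lower_bound} to $[0,2]$ here: for $\alpha<0$ the map $g_\alpha$ is decreasing and would reverse the inequality, turning the binary divergence into an upper bound instead.

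Next, for $\alpha\in(0,1)\cup(1,2]$ and $m_P\neq m_Q$, Theorem~\ref{theorem_lower_bound}(a), whose hypothesis $\alpha\in[-1,2]$ covers this range, gives $D_A^{(\alpha)}(P\|Q)\ge d_A^{(\alpha)}(r\|s)$; applying the increasing map $g_\alpha$ to both sides yields $D_R^{(\alpha)}(P\|Q)\ge d_R^{(\alpha)}(r\|s)$, which is the asserted lower bound. The order $\alpha=1$ is immediate, since there both the R\'{e}nyi and the $\alpha$-divergence reduce to the Kullback--Leibler divergence and the bound is Theorem~\ref{theorem_lower_bound}(a) at $\alpha=1$. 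Attainment is inherited for free: for the extremal binary pair $(R,S)\in\set{P}_2$ of Theorem~\ref{theorem_lower_bound}(b), the R\'{e}nyi divergence $D_R^{(\alpha)}(R\|S)$ equals $d_R^{(\alpha)}(r\|s)$ by the very definition of the binary R\'{e}nyi divergence, so the same pair that makes the $\alpha$-divergence tight also makes the R\'{e}nyi bound tight.

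The remaining endpoint $\alpha=0$ is the delicate step, and I expect it to be the main obstacle, since $g_\alpha$ degenerates there and $D_R^{(0)}$ is defined separately, so the transformation argument does not apply directly. The plan is to pass to the limit: the bound $D_R^{(\alpha)}(P\|Q)\ge d_R^{(\alpha)}(r\|s)$ holds for every $\alpha\in(0,1)$, the right-hand side is an explicit continuous function of $\alpha$ at $0$ for the fixed values $r,s$ (one checks $\lim_{\alpha\to0^+}d_R^{(\alpha)}(r\|s)=d_R^{(0)}(r\|s)$ directly from the formula, distinguishing $r\in(0,1)$ from $r\in\{0,1\}$), and the R\'{e}nyi divergence is continuous in its order at $0$, so $\lim_{\alpha\to0^+}D_R^{(\alpha)}(P\|Q)=D_R^{(0)}(P\|Q)$. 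Letting $\alpha\to0^+$ then gives $D_R^{(0)}(P\|Q)\ge d_R^{(0)}(r\|s)$, with attainment again at the binary pair. The continuity of $D_R^{(\alpha)}$ in the order at $0$ is the one external fact that must be invoked; alternatively one may treat $\alpha=0$ by hand, reducing the claim to $Q(p>0)\le s\,1\{r>0\}+(1-s)\,1\{1-r>0\}$, which is trivial when $0<r<1$ and otherwise a short boundary computation.
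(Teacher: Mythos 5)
Your proposal is correct, and for $\alpha\in(0,2]$ it is essentially the paper's argument made explicit: both the full and the binary R\'{e}nyi divergences are the image of their $\alpha$-divergence counterparts under the common map $g_\alpha(x)=\frac{1}{\alpha-1}\log\bigl(1+\alpha(\alpha-1)x\bigr)$, which is increasing for $\alpha>0$ because its derivative $\alpha/\bigl(1+\alpha(\alpha-1)x\bigr)$ has positive denominator $\int p^\alpha q^{1-\alpha}\,\mathrm{d}\mu$; Theorem~\ref{theorem_lower_bound} then transfers directly. The paper compresses all of this into ``by Definition~\ref{def:renyi-div} and Theorem~\ref{theorem_lower_bound}.''

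The only genuine divergence is at the endpoint $\alpha=0$. When $\sigma_P>0$ (so $r\in(0,1)$) you and the paper agree that the binary bound degenerates to $0$ and the claim is trivial. When $\sigma_P=0$, the paper works directly: it reduces to showing $q:=Q(\{m_P\})\leq s=\sigma_Q^2/(\sigma_Q^2+a^2)$ and proves this by Cauchy--Schwarz applied to $\sum_{u_i\neq m_P}Q(u_i)u_i$, i.e.\ a one-sided Chebyshev-type bound. Your primary route instead passes to the limit $\alpha\to0^+$ in the already-proved bounds, invoking continuity of $D_R^{(\alpha)}(P\|Q)$ in the order at $0$. That route is valid --- the needed fact $\lim_{\alpha\downarrow0}\int p^\alpha q^{1-\alpha}\,\mathrm{d}\mu=Q(p>0)$ follows from dominated convergence using $p^\alpha q^{1-\alpha}\leq p+q$, and note that it is genuinely load-bearing: monotonicity of $D_R^{(\alpha)}$ in $\alpha$ alone points the wrong way, so without continuity at $0$ the limit argument would only bound $\lim_{\alpha\downarrow0}D_R^{(\alpha)}$, not $D_R^{(0)}$. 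Your fallback ``short boundary computation'' is exactly the paper's Cauchy--Schwarz step; you correctly identify the reduction to $Q(p>0)\leq s\,1\{r>0\}+(1-s)\,1\{1-r>0\}$ but leave that inequality unproved, so if you take the direct route you still owe that estimate, which is the only real content of the paper's proof at this endpoint.
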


\begin{proof}
For $\alpha\in(0,2]$, the result follows by Definition~\ref{def:renyi-div} and Theorem~\ref{theorem_lower_bound}.
Since the binary R\'{e}nyi divergence is equal to $0$ for $\alpha=0$ and $\sigma_P>0$, we obtain the result.
For $\alpha=\sigma_P=0$, we have $P(u)=1$ at $u=m_P$. Letting $q=Q(u)$, we obtain 
\begin{align}
\label{cor_m_q}
\sum_{u_i\neq m_P} Q(u_i)u_i=m_Q-qm_P, \\[0.1cm]
\label{cor_sigma_q}
\sum_{u_i\neq m_P} Q(u_i)u_i^2=\sigma_Q^2 + m_Q^2-qm_P^2.
\end{align}
By the Cauchy-Schwarz inequality, it follows that
\begin{align}
\label{cor_inequality}
\Bigl(\sum_{u_i\neq m_P} Q(u_i)u_i\Bigr)^2=\Bigl(\sum_{u_i\neq m_P} \sqrt{Q(u_i)}(\sqrt{Q(u_i)}u_i)\Bigr)^2\leq \Bigl(\sum_{u_i\neq m_P} Q(u_i)\Bigr)\Bigl(\sum_{u_i\neq m_P} Q(u_i)u_i^2\Bigr)=(1-q)(\sigma_Q^2 + m_Q^2-qm_P^2).
\end{align}
By combining this inequality with \eqref{cor_m_q}, we obtain
\begin{align}
\label{upper_bound}
q\leq \frac{\sigma_Q^2}{\sigma_Q^2+a^2}.
\end{align}
From \eqref{r} and \eqref{s}, we have $r=1$ and $s=\frac{\sigma_Q^2}{\sigma_Q^2+a^2}$ for $a>0$. By combining \eqref{upper_bound} with Definition \eqref{def:renyi-div}, the result follows. The case $a<0$ can be justified in a similar way.
\end{proof}

\section{Proofs of main results}
\label{section: proofs}
\subsection{Proof of Theorem~\ref{theorem_relation_alpha}}
\begin{proof}
Let $q_t:= (q-p)t+p$ for all $t\in(0,1]$. For $\alpha \neq 0, \;\pm 1$, we obtain
\begin{align}
\label{ra_1}
\frac{\dif}{\dif t}\Bigl(\Bigl(\int p^\alpha q_t^{1-\alpha} \mathrm{d}\mu -1\Bigr)t^{\alpha-1}\Bigr)&=(\alpha-1)t^{\alpha-2}\Bigl(\int\Bigl(-p^\alpha q_t^{-\alpha}(q-p)t + p^\alpha q_t^{1-\alpha}\Bigr) \mathrm{d}\mu-1\Bigr) \\[0.1cm] \nonumber
&=(\alpha-1)t^{\alpha-2}\Bigl(\int\Bigl(p^\alpha q_t^{-\alpha}(-(q-p)t + q_t)\Bigr) \mathrm{d}\mu-1\Bigr) \\[0.1cm] 
&=(\alpha-1)t^{\alpha-2}\Bigl(\int p^{1+\alpha}q_t^{-\alpha}\mathrm{d}\mu-1\Bigr)=(\alpha+1)\alpha(\alpha-1) t^{\alpha-2}D_A^{(\alpha+1)}(P\|Q_t). \nonumber
\end{align}
Dividing $(\alpha+1)\alpha(\alpha-1)t^{\alpha-2}$ both sides of \eqref{ra_1}, we obtain \eqref{relation_alpha_1}.
For $\alpha=1$, we obtain
\begin{align}
\frac{t}{2} \frac{\dif}{\dif t}\int p\log\frac{p}{q_t}\mathrm{d}\mu=\frac{1}{2} \int \frac{p(p-q_t)}{q_t}\mathrm{d}\mu=\frac{1}{2} \int \frac{(q_t-p)^2}{q_t}\mathrm{d}\mu
=D_A^{(2)}(P\|Q_t).
\end{align}
For $\alpha=0$, we obtain
\begin{align}
t^2 \frac{\dif}{\dif t}\int t^{-1}q_t\log\frac{q_t}{p}\mathrm{d}\mu=t^2\int \Bigl(-t^{-2}p\log\frac{q_t}{p}+t^{-1}(q-p) \mathrm{d}\mu\Bigr)=\int p\log\frac{p}{q_t} \mathrm{d}\mu
=D_A^{(1)}(P\|Q_t).
\end{align}
By the duality \eqref{eq-duality}, and swapping $P$ and $Q$ for \eqref{relation_alpha_1}, it follows that
\begin{align}
D_A^{(-\alpha)}(P_t\|Q)=\frac{t^{2-\alpha}}{(\alpha+1)}\frac{\dif}{\dif t}\Bigl(t^{\alpha-1}D_A^{(1-\alpha)}(P_t\|Q)\Bigr).
\end{align}
Replacing $\alpha$ by $-\alpha$ yields \eqref{relation_alpha_2}.
\end{proof}

\subsection{Proof of Theorem~\ref{theorem_lower_bound}}
\label{sub_sec:minimum_condition}
\begin{lemma} \label{minimum_condition}
\label{lem_minimum}
For $R>0$, let $\set{P}_{n, R}\subseteq \set{P}_{n}$ be a set of pairs of probability measures such that $|u_i|\leq R$ for all $i=1,2,\cdots, n$.
Let $(P,Q)\in \set{P}_{n, R}\cap \set{P}[m_P, \sigma_P, m_Q, \sigma_Q]$. If $\alpha\in(0,1)$ and $m_P\neq m_Q$, the global minimum points $(P^*, Q^*, \bm{u}^*)=\mathrm{argmin}_{(P,Q)\in\set{P}_{n,R}\cap \set{P}[m_P, \sigma_P, m_Q, \sigma_Q]} D_A^{(\alpha)}(P\|Q)$ satisfy any of the following conditions.
\begin{enumerate}[(1)]
\item
$(P^*,Q^*)\in\set{P}_2$, and $\max_i |u^*_i| < R$.
\item
$\max_i |u^*_i| = R$.
\end{enumerate}
\end{lemma}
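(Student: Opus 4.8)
The plan is to combine a compactness argument for existence with a Lagrangian (KKT) analysis of the first-order conditions; the key point is that optimality forces the support locations to be roots of a single cubic polynomial, and a second-order (global optimality) refinement cuts the resulting count of three down to two.

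First I would note that the feasible set $\set{P}_{n,R}\cap\set{P}[m_P,\sigma_P,m_Q,\sigma_Q]$ is compact: the locations range over the box $[-R,R]^n$, the weight vectors lie in a product of two simplices, and the moment equalities are closed conditions; since $p^\alpha q^{1-\alpha}$ is continuous (including where a coordinate vanishes, as $\alpha\in(0,1)$), $D_A^{(\alpha)}$ attains its minimum, yielding a global minimizer $(P^*,Q^*,\bm u^*)$. If $\max_i|u_i^*|=R$ we are in case (2) and done, so assume all locations are interior; the goal is then to show the effective support has at most two points, i.e. $(P^*,Q^*)\in\set{P}_2$.

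Because $\alpha(\alpha-1)<0$, minimizing $D_A^{(\alpha)}$ is equivalent to maximizing $S:=\sum_i p_i^\alpha q_i^{1-\alpha}$ under the six equality constraints (two normalizations, two means, two second moments). I would introduce the quadratic potentials $\mu_P(u):=\lambda_P+\beta_P u+\gamma_P u^2$ and $\mu_Q(u):=\lambda_Q+\beta_Q u+\gamma_Q u^2$, absorbing the three multipliers attached to each measure. Writing $w_i:=q_i/p_i$, stationarity in $p_i$, $q_i$ and (since the locations are interior) in $u_i$ gives, at every atom carrying mass,
\begin{align}
\mu_P(u_i)=\alpha\,w_i^{1-\alpha},\qquad \mu_Q(u_i)=(1-\alpha)\,w_i^{-\alpha},\qquad \mu_P'(u_i)+w_i\,\mu_Q'(u_i)=0. \nonumber
\end{align}
The first two relations force $w_i=\dfrac{(1-\alpha)\mu_P(u_i)}{\alpha\,\mu_Q(u_i)}$; substituting into the third and clearing denominators gives
\begin{align}
\alpha\,\mu_P'(u_i)\,\mu_Q(u_i)+(1-\alpha)\,\mu_P(u_i)\,\mu_Q'(u_i)=0. \nonumber
\end{align}
The left-hand side is a polynomial in $u_i$ of degree at most three (its leading coefficient is $2\gamma_P\gamma_Q$), so all interior support locations are roots of one fixed cubic, and there are at most three of them.

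The main obstacle is to sharpen ``at most three'' to ``at most two''. For this I would pass from the first-order conditions to the global optimality of the minimizer, which is equivalent to the dual-feasibility inequality obtained by maximizing the per-atom Lagrangian over $p,q\ge0$: by degree-one homogeneity this forces the tangency bound $w^{1-\alpha}\le\mu_P(u)+\mu_Q(u)w$ for all $w\ge0$ and all admissible $u$, i.e. the single-variable function $F(u):=\mu_P(u)\,\mu_Q(u)^{(1-\alpha)/\alpha}$ satisfies $F(u)\ge\kappa$ on $[-R,R]$ for the explicit constant $\kappa=\alpha(1-\alpha)^{(1-\alpha)/\alpha}$, with equality exactly at the support locations. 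Thus every interior support point is a global minimizer of $F$, and $F'=0$ is precisely the cubic above. Since $k$ interior minima of a smooth function must be separated by at least $k-1$ local maxima, a $k$-point configuration would produce at least $2k-1$ critical points; the cubic admits at most three, whence $k\le2$, giving $(P^*,Q^*)\in\set{P}_2$ and case (1). I expect the delicate points to be: justifying the existence of multipliers (a constraint qualification, which $m_P\neq m_Q$ helps guarantee by keeping the constraint gradients independent and the cubic genuinely non-degenerate), handling atoms where only one of $p_i,q_i$ vanishes (they contribute nothing to $S$ and can be absorbed), and establishing the \emph{global} bound $F\ge\kappa$ rather than its merely local version — this last step, upgrading stationarity to the minimizing property that rules out a three-point configuration, is where the real work lies.
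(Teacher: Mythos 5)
Your overall architecture matches the paper's: compactness for existence, a Lagrangian with one quadratic potential per measure, the observation that interior support points must be roots of the cubic $\alpha\mu_P'\mu_Q+(1-\alpha)\mu_P\mu_Q'$, and a critical-point count to push ``at most three'' down to ``at most two.'' However, there is one genuine gap. You dispose of atoms where exactly one of $p_i,q_i$ vanishes by saying they ``contribute nothing to $S$ and can be absorbed.'' They cannot be absorbed: an atom with $p_j>0$, $q_j=0$ contributes nothing to $S=\sum_i p_i^\alpha q_i^{1-\alpha}$, but it carries mass of $P$ and enters $P$'s mean and variance constraints, so it is a genuine additional support point of the pair; if such atoms survive, the conclusion $(P^*,Q^*)\in\set{P}_2$ fails even when the two-sided support has size two. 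The paper rules these out by a perturbation that is essential to the lemma and uses $\alpha\in(0,1)$ in a way your sketch never does: place mass $\epsilon$ of $Q$ at such a location and restore the six moment constraints by $O(\epsilon)$ adjustments of the remaining coordinates (possible because the relevant constraint Jacobians are nonsingular when the $u_i$ are distinct); the objective then improves by $\epsilon^{1-\alpha}{p_j^*}^{\alpha}+O(\epsilon)>0$, and since $\epsilon^{1-\alpha}\gg\epsilon$, this contradicts global minimality. Without this step your first-order analysis only controls the locations where both masses are positive, and the lemma's dichotomy is not established.

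On the reduction from three support points to two, your route is workable but heavier than necessary. You propose to prove the global dual-feasibility bound $F(u)\geq\kappa$ on all of $[-R,R]$ (so that support points are global minima of $F$), which requires a separate ``add mass at a new location'' perturbation argument that you correctly flag as the real work. The paper avoids this entirely: the stationarity conditions in $p_i$ and $q_i$ alone already force $\beta(u_i^*)=0$ with $\beta:=\alpha\log(\mu_P/\alpha)+(1-\alpha)\log(\mu_Q/(1-\alpha))=\log F-\log\kappa$, i.e.\ the support points all lie on the \emph{same} level set of $F$, and the stationarity in $u_i$ gives $\beta'(u_i^*)=0$. With three support points, Rolle's theorem between consecutive ones produces additional roots of $\beta'$, exceeding the cubic's degree --- no global bound needed. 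You should also not leave the degenerate case (the cubic vanishing identically) to a parenthetical about constraint qualification: the paper shows that identical vanishing forces $\mu_P,\mu_Q$ constant (using $0<\alpha<1$ and the fractional exponent in the identity $\mu_Q=(1-\alpha)(\mu_P/\alpha)^{\alpha/(\alpha-1)}$), hence $q_i=Cp_i$ and $P=Q$, contradicting $m_P\neq m_Q$; and it treats $\sigma_P=0$ or $\sigma_Q=0$ by a separate, simpler Lagrangian. These pieces need to be filled in for your argument to close.
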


\begin{proof}
See Appendix~\ref{ap_minimum_condition}.
\end{proof}
\begin{lemma} \label{decreasing}
If $\alpha\in(0,1)$ and $m_P\neq m_Q$, the binary $\alpha$-divergence is monotonically decreasing with respect to both $\sigma_P$ and $\sigma_Q$.
\end{lemma}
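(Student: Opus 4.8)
The plan is to differentiate $d_A^{(\alpha)}(r\|s)$ along the constraint set and show the derivative is nonpositive, after first halving the work by duality. From \eqref{eq-binary alpha div} one checks directly that $d_A^{(\alpha)}(r\|s)=d_A^{(1-\alpha)}(s\|r)$, and that exchanging $P$ and $Q$ sends $(m_P,\sigma_P,m_Q,\sigma_Q)\mapsto(m_Q,\sigma_Q,m_P,\sigma_P)$ and, by a direct check via \eqref{r}--\eqref{v}, sends $(r,s)\mapsto(s,r)$. Hence ``decreasing in $\sigma_Q$ at order $\alpha$'' is literally the statement ``decreasing in $\sigma_P$ at order $1-\alpha$'', and since $1-\alpha\in(0,1)$ whenever $\alpha\in(0,1)$, it suffices to prove that $d_A^{(\alpha)}(r\|s)$ decreases in $\sigma_P$ for every $\alpha\in(0,1)$. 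I set $x:=\sigma_P^2$ (monotonicity in $\sigma_P$ and in $x$ being equivalent) and take $a=m_P-m_Q>0$ without loss of generality, since the substitution $a\mapsto-a$ sends $(r,s)\mapsto(1-r,1-s)$, under which $d_A^{(\alpha)}$ is invariant; then $r>s$ by \eqref{r}--\eqref{s}.

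The key is the reparametrization $w:=u_1-u_2=2v$, under which \eqref{r}--\eqref{v} are equivalent to the moment identities $\sigma_P^2=r(1-r)w^2$, $\sigma_Q^2=s(1-s)w^2$, and $a=(r-s)w$. Differentiating these in $x$ collapses to the clean forms
\begin{align}
\frac{\dif r}{\dif x}=-\frac{r+s-2rs}{aw},\qquad \frac{\dif s}{\dif x}=-\frac{2s(1-s)}{aw}. \nonumber
\end{align}
Combining this with $\partial_r d_A^{(\alpha)}=\tfrac{1}{\alpha-1}(A_1^{\alpha-1}-A_2^{\alpha-1})$ and $\partial_s d_A^{(\alpha)}=-\tfrac1\alpha(A_1^{\alpha}-A_2^{\alpha})$, where $A_1:=r/s>1$ and $A_2:=(1-r)/(1-s)<1$, reduces $\tfrac{\dif}{\dif x}d_A^{(\alpha)}(r\|s)\le 0$ to showing that a single bracket is nonnegative. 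Using the identity $r+s-2rs=s(1-s)(A_1+A_2)$ the common factor $s(1-s)>0$ cancels, and the whole claim collapses to an inequality in $A_1,A_2,\alpha$ alone,
\begin{align}
\alpha(A_1+A_2)\bigl(A_2^{\alpha-1}-A_1^{\alpha-1}\bigr)\ge 2(1-\alpha)\bigl(A_1^{\alpha}-A_2^{\alpha}\bigr). \nonumber
\end{align}

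This inequality is homogeneous of degree $\alpha$ in $(A_1,A_2)$, so with $t:=A_1/A_2>1$ it is equivalent to $\psi(t)\ge 0$, where $\psi(t):=\alpha(t-t^{\alpha-1})-(2-\alpha)(t^{\alpha}-1)$. A short computation gives $\psi(1)=\psi'(1)=0$ and $\psi''(t)=\alpha(1-\alpha)(2-\alpha)\,t^{\alpha-3}(t-1)$, which is strictly positive for $t>1$; thus $\psi$ is convex on $[1,\infty)$, hence nonnegative, and strictly positive for $t>1$. Since $m_P\neq m_Q$ forces $r>s$ and hence $t>1$, the binary divergence strictly decreases in $\sigma_P$, and the decrease in $\sigma_Q$ follows from the duality reduction above. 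I expect the only real obstacle to be organizational rather than analytic: spotting the reparametrization $w=2v$ together with the identity $r+s-2rs=s(1-s)(A_1+A_2)$, which turn an a priori unwieldy two-variable derivative into the one-variable convex function $\psi$. After that the sign analysis is routine, and as a consistency check, at $\alpha=\tfrac12$ the displayed inequality reduces to the AM--GM inequality $A_1+A_2\ge 2\sqrt{A_1A_2}$.
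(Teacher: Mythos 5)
Your proof is correct, and it reaches the same destination as the paper by a genuinely different route. The paper writes $d_A^{(\alpha)}(r\|s)=\frac{1}{\alpha(\alpha-1)}\bigl(f^{(\alpha)}(V_P,V_Q,a)+f^{(\alpha)}(V_P,V_Q,-a)-1\bigr)$ with $f^{(\alpha)}=r^\alpha s^{1-\alpha}$, differentiates $r$ and $s$ \emph{explicitly} from the closed forms \eqref{r}--\eqref{v}, and reduces the sign of \eqref{dif_f3} to the one-variable inequality of Lemma~\ref{alpha_inequality}, namely $(1-\alpha x)(1+x)^\alpha\ge(1+\alpha x)(1-x)^\alpha$ in the variable $x=2va/(V_P+V_Q+a^2)$, proved via $F(0)=0$, $F'>0$; it then obtains monotonicity in $V_P$ from the symmetry $f^{(\alpha)}(V_P,V_Q,a)=f^{(1-\alpha)}(V_Q,V_P,-a)$. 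You instead differentiate the divergence directly by the chain rule, obtain $\dif r/\dif x$ and $\dif s/\dif x$ by \emph{implicit} differentiation of the moment identities $\sigma_P^2=r(1-r)w^2$, $\sigma_Q^2=s(1-s)w^2$, $a=(r-s)w$ (I checked these derivatives; they are right, using $(r+s-2rs)^2-4rs(1-r)(1-s)=(r-s)^2$), and land on the homogeneous inequality in $A_1=r/s$, $A_2=(1-r)/(1-s)$, which you settle by convexity of $\psi(t)=\alpha(t-t^{\alpha-1})-(2-\alpha)(t^\alpha-1)$ on $[1,\infty)$ via $\psi(1)=\psi'(1)=0$ and $\psi''(t)=\alpha(1-\alpha)(2-\alpha)t^{\alpha-3}(t-1)>0$; the order-duality $d_A^{(\alpha)}(r\|s)=d_A^{(1-\alpha)}(s\|r)$ then disposes of the $\sigma_Q$ direction. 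What your version buys is cleaner bookkeeping (the implicit differentiation avoids the paper's somewhat heavy algebra in \eqref{dif_v}--\eqref{dif_f2}) and a final inequality whose proof is a two-line convexity check with a pleasant AM--GM specialization at $\alpha=\tfrac12$; what the paper's version buys is an explicit formula for $\partial f^{(\alpha)}/\partial V_Q$ and the inequality of Lemma~\ref{alpha_inequality}, both of which are reused verbatim in the proof of Lemma~\ref{exception} for $\alpha<-1$, so your route would require redoing that part separately. The only caveat, shared with the paper's own argument, is that the differentiation is carried out at interior points $r,s\in(0,1)$ (equivalently $\sigma_P,\sigma_Q>0$), with the boundary handled by continuity; it would be worth one sentence to say so.
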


\begin{proof}
See Appendix~\ref{ap_decreasing}.
\end{proof}

\begin{lemma}
\label{lem_p2}
If $m_P\neq m_Q$, a set $\set{P}_2 \cap \set{P}[m_P, \sigma_P, m_Q, \sigma_Q]$ has one component $(R,S)$ that is given by \eqref{eq_binary_prob}, and \eqref{vec_u_1,2}.
\end{lemma}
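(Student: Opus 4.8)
The plan is to read the four moment conditions in \eqref{constraints} as a system of equations in the four unknowns that describe an element of $\set{P}_2$ supported on $\{u_1,u_2\}$, namely the two points together with the two probabilities $r=R(u_1)$ and $s=S(u_1)$, and to show that (up to the irrelevant relabeling $u_1\leftrightarrow u_2$) this system has a unique solution, which is exactly \eqref{r}, \eqref{s}, and \eqref{vec_u_1,2}. Since the prescribed variances are positive, any admissible support consists of two distinct points, so I may fix the ordering $u_1>u_2$; then each prescribed mean is a strict convex combination of $u_1$ and $u_2$, which forces $u_1>m_P,m_Q>u_2$ and $r,s\in(0,1)$. Introducing the centered quantities $x_1:=u_1-m_P>0$ and $x_2:=m_P-u_2>0$, the mean and variance constraints for $P$ and $Q$ become the four scalar equations
\begin{align}
& r x_1 = (1-r)x_2, \qquad s x_1 - (1-s)x_2 = -a, \nonumber \\
& r x_1^2 + (1-r)x_2^2 = \sigma_P^2, \qquad s x_1^2 + (1-s)x_2^2 = \sigma_Q^2 + a^2, \nonumber
\end{align}
where $a=m_P-m_Q$ as in \eqref{a}, and the shift $\sigma_Q^2+a^2$ arises from expressing the $Q$-variance about $m_P$ rather than $m_Q$.

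First I would use the two mean equations to eliminate $r$ and $s$, obtaining $r=\frac{x_2}{x_1+x_2}$ and $s=\frac{x_2-a}{x_1+x_2}$. Substituting $r$ into the $P$-variance equation collapses it to the single clean relation $x_1 x_2=\sigma_P^2$, and solving $r=\frac{x_2}{x_1+x_2}$ back for $x_1,x_2$ in terms of $r$ and $\sigma_P^2$ reproduces exactly \eqref{vec_u_1,2}. Substituting $s$ into the $Q$-variance equation and using $x_1 x_2=\sigma_P^2$ reduces it, after cancellation, to the linear relation $a(x_1-x_2)=\sigma_P^2-\sigma_Q^2-a^2$, i.e.\ a prescribed value for the difference $x_1-x_2$.

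At this point $x_1$ and $x_2$ are pinned down by their product $x_1 x_2=\sigma_P^2>0$ and their difference, so they are the two roots of a single quadratic and $x_1+x_2=\sqrt{(x_1-x_2)^2+4\sigma_P^2}$; a short computation identifies this sum with $2v$ for $v$ as in \eqref{v}. Because $\sigma_P^2>0$, the positive square root yields a genuinely positive pair $(x_1,x_2)$, hence a unique admissible support. Back-substituting $x_1+x_2=2v$ and the resulting value of $x_2$ into $r=\frac{x_2}{x_1+x_2}$ and $s=\frac{x_2-a}{x_1+x_2}$ gives precisely \eqref{r} and \eqref{s} (the sign of $a$ being absorbed correctly since only $v$ carries $|a|$), and into $u_i=m_P\pm x_i$ gives \eqref{vec_u_1,2} and \eqref{eq_binary_prob}. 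The only remaining freedom, the labeling of the two points, sends $(r,s,u_1,u_2)$ to $(1-r,1-s,u_2,u_1)$ and describes the same pair of measures, so $\set{P}_2\cap\set{P}[m_P,\sigma_P,m_Q,\sigma_Q]$ has a single component.

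The step I expect to be most delicate is not the uniqueness, which is immediate once the reduction is done, but the bookkeeping that guarantees admissibility: I must confirm that the unique $(x_1,x_2)$ is strictly positive and that the resulting $r,s$ lie in $[0,1]$, and I must track the sign of $a$ carefully when passing between $x_1-x_2$, the square root defining $v$, and the signed numerators in \eqref{r}--\eqref{s}, since an incorrect root choice would merely interchange $u_1$ and $u_2$. Feasibility of $r,s\in[0,1]$ is exactly the content already asserted alongside \eqref{r}--\eqref{s}, and I would verify it directly from the closed forms; everything else is the routine algebra of symmetric functions of the two centered points.
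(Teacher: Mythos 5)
Your proposal is correct, and it is worth noting that the paper itself gives no argument for this lemma at all: it simply cites Lemma~1 of an earlier paper of the author (\cite{nishiyama2020tight}), so you are supplying a self-contained derivation where the paper outsources one. Your reduction is the natural one and the algebra checks out: with $x_1=u_1-m_P$, $x_2=m_P-u_2$ and $a=m_P-m_Q\neq 0$, the mean equations give $r=x_2/(x_1+x_2)$ and $s=(x_2-a)/(x_1+x_2)$, the $P$-variance collapses to $x_1x_2=\sigma_P^2$ (which inverts to \eqref{vec_u_1,2}), and the $Q$-variance reduces to $a(x_1-x_2)=\sigma_P^2-\sigma_Q^2-a^2$; here the hypothesis $m_P\neq m_Q$ is exactly what lets you divide by $a$, and a prescribed positive sum, difference, and product pin down the ordered pair $(x_1,x_2)$ uniquely, with $x_1+x_2=\bigl((x_1-x_2)^2+4\sigma_P^2\bigr)^{1/2}=2v$ recovering \eqref{v} and then \eqref{r}--\eqref{s}. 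Two small points of bookkeeping deserve explicit attention in a final write-up. First, you assume $\sigma_P^2>0$ (``two distinct support points,'' ``genuinely positive pair''), but the paper later invokes this lemma with a degenerate marginal (e.g.\ $(R',S')\in\set{P}_2\cap\set{P}[m_P,\sigma_P,m_Q,0]$ in the necessity proof), so you should either note that the degenerate cases $\sigma_P=0$ or $\sigma_Q=0$ force $r\in\{0,1\}$ or $s\in\{0,1\}$ and are handled separately, or observe that your formulas extend by continuity. Second, the feasibility check $r,s\in[0,1]$ that you defer does go through: for instance $s\geq 0$ amounts to $2|a|v\geq |\sigma_P^2+a^2-\sigma_Q^2|$, which follows by squaring since $(2av)^2-(\sigma_P^2+a^2-\sigma_Q^2)^2=4a^2\sigma_Q^2\geq 0$. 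With those two items filled in, your argument is a complete proof of the lemma.
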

\begin{proof}
See~\cite{nishiyama2020tight}[Lemma 1].
\end{proof}

\begin{lemma}\label{exception}
Let $(R,S)\in\set{P}_2 \cap \set{P}[m_P, \sigma_P, m_Q, \sigma_Q]$, and let $(R',S')\in\set{P}_2 \cap \set{P}[m_P, \sigma_P, m_Q, 0]$.
If $\alpha<-1$, there exixts $\sigma_Q$ such that $d_A^{(\alpha)}(r\|s)>d_A^{(\alpha)}(r'\|s')$.

\end{lemma}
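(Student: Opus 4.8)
The plan is to regard $d_A^{(\alpha)}(r\|s)$ as a function of the single free parameter $\sigma_Q\ge 0$ (with $m_P,m_Q,\sigma_P$ held fixed through \eqref{r}--\eqref{v}) and to show that it is unbounded in $\sigma_Q$, whereas $d_A^{(\alpha)}(r'\|s')$ is a fixed finite number; the existence of the desired $\sigma_Q$ then follows at once. First I would record that $(R',S')$ is exactly the $\sigma_Q\to 0$ specialization of $(R,S)$: substituting $\sigma_Q=0$ into \eqref{r}--\eqref{v} (legitimate by Lemma~\ref{lem_p2}, which applies for any variances once $m_P\neq m_Q$) gives, for $a:=m_P-m_Q>0$, the values $r'=a^2/(\sigma_P^2+a^2)$ and $s'=0$. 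The case $a<0$ is symmetric under the relabeling $u_1\leftrightarrow u_2$, i.e.\ $r\mapsto 1-r$, $s\mapsto 1-s$, which leaves $d_A^{(\alpha)}$ invariant, so I may assume $a>0$.

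Next I would check that $d_A^{(\alpha)}(r'\|s')$ is finite. Since $\alpha<-1$ forces $1-\alpha>0$, the convention $0^{1-\alpha}=0$ kills the first summand in \eqref{eq-binary alpha div}, leaving $d_A^{(\alpha)}(r'\|s')=\frac{1}{\alpha(\alpha-1)}\bigl((1-r')^\alpha-1\bigr)$, a finite (indeed positive, since $0<1-r'<1$ and $\alpha<0$) constant.

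The heart of the argument is the behaviour of $d_A^{(\alpha)}(r\|s)$ as $\sigma_Q\to\infty$. Rather than expanding \eqref{r}--\eqref{v} directly, I would use the two-point identities characterizing $(R,S)$: with $\Delta:=u_1-u_2=\sigma_P/\sqrt{r(1-r)}$ one has $\sigma_P^2=r(1-r)\Delta^2$, $\sigma_Q^2=s(1-s)\Delta^2$, and $a=(r-s)\Delta$. Setting $\epsilon:=1-r$, these give $\Delta\sim\sigma_P/\sqrt{\epsilon}$, hence $1-s=\epsilon+a/\Delta\sim a\sqrt{\epsilon}/\sigma_P$, and therefore $\sigma_Q^2\sim a\sigma_P/\sqrt{\epsilon}$ as $\epsilon\to 0$. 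Inverting, $\sigma_Q\to\infty$ corresponds to $\epsilon\to 0$ with $1-r=\epsilon\sim a^2\sigma_P^2/\sigma_Q^4$ and $1-s\sim a^2/\sigma_Q^2$; in particular $r,s\to 1$. Feeding these into \eqref{eq-binary alpha div}, the first summand $r^\alpha s^{1-\alpha}\to 1$ stays bounded, while the second behaves like
\[
(1-r)^\alpha (1-s)^{1-\alpha}\sim \bigl(a^2\sigma_P^2\bigr)^{\alpha}\bigl(a^2\bigr)^{1-\alpha}\,\sigma_Q^{-4\alpha-2(1-\alpha)}=C\,\sigma_Q^{-2(\alpha+1)},
\]
for a positive constant $C$. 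For $\alpha<-1$ the exponent $-2(\alpha+1)$ is strictly positive, so this term diverges; since $\alpha(\alpha-1)>0$, it follows that $d_A^{(\alpha)}(r\|s)\to+\infty$ as $\sigma_Q\to\infty$. As $d_A^{(\alpha)}(r'\|s')$ is finite, there exists $\sigma_Q$ with $d_A^{(\alpha)}(r\|s)>d_A^{(\alpha)}(r'\|s')$, which is the claim.

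The main obstacle is the asymptotic bookkeeping in the third paragraph: everything hinges on $1-r$ decaying like $\sigma_Q^{-4}$ while $1-s$ decays only like $\sigma_Q^{-2}$, so that the negative exponent $\alpha$ acts on the faster-vanishing factor and the product blows up. The delicate points are confirming that $r,s\to 1$ (rather than $\to 0$) in the regime $a>0$, and verifying that the suppressed subleading corrections do not alter the leading exponent $-2(\alpha+1)$; I expect this to be cleanest through the geometric identities $\sigma_P^2=r(1-r)\Delta^2$, $\sigma_Q^2=s(1-s)\Delta^2$, $a=(r-s)\Delta$ rather than the raw formula for $v$.
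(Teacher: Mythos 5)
Your proof is correct, but it reaches the conclusion by a genuinely different route than the paper. The paper's proof is local at $\sigma_Q=0$: it reuses the derivative formula \eqref{dif_f3} from the proof of Lemma~\ref{decreasing}, observes that $x(0)=1$ so that for $\alpha<-1$ one has $1+\alpha x<0$ on a neighbourhood $[0,V_Q]$ of the origin, concludes that the bracket $(1+x)^\alpha(1-\alpha x)-(1-x)^\alpha(1+\alpha x)$ is positive there, and hence that $d_A^{(\alpha)}(r\|s)$ is \emph{increasing} in $\sigma_Q^2$ near $0$ --- i.e.\ the monotonicity of Lemma~\ref{decreasing} reverses, which immediately gives some small $\sigma_Q$ with $d_A^{(\alpha)}(r\|s)>d_A^{(\alpha)}(r'\|s')$. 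You instead work at the opposite end, $\sigma_Q\to\infty$: your asymptotics $1-r\sim a^2\sigma_P^2/\sigma_Q^4$ and $1-s\sim a^2/\sigma_Q^2$ are correct (one can confirm them directly from \eqref{r}--\eqref{v} via $2av-(\sigma_Q^2-\sigma_P^2+a^2)=4a^2\sigma_P^2/\bigl(2av+\sigma_Q^2-\sigma_P^2+a^2\bigr)$ and the analogous identity for $s$), so $(1-r)^\alpha(1-s)^{1-\alpha}\asymp\sigma_Q^{-2(\alpha+1)}\to\infty$ for $\alpha<-1$, while $d_A^{(\alpha)}(r'\|s')=\frac{1}{\alpha(\alpha-1)}\bigl((1-r')^\alpha-1\bigr)$ is finite under the convention $0\cdot f(a/0)=a\lim_{u\to\infty}f(u)/u$. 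Your reduction to $a>0$ via $(r,s)\mapsto(1-r,1-s)$ matches the symmetry the paper itself uses. The trade-off: your argument is self-contained (it does not need the machinery of Appendix~B) and yields the strictly stronger fact that the binary $\alpha$-divergence is unbounded in $\sigma_Q$ under fixed means and $\sigma_P$, so the gap can be made arbitrarily large; the paper's argument is shorter given that \eqref{dif_f3} is already available, and it pinpoints the failure of monotonicity at $\sigma_Q=0$ as the structural reason the condition $\alpha\ge-1$ is necessary. Either version suffices for the role the lemma plays in the proof of the necessary condition, since only existence of one such $\sigma_Q$ is used there.
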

\begin{proof}
See Appendix~\ref{ap_exception}.
\end{proof}

We show Theorem~\ref{theorem_lower_bound} by 4-step.
\begin{proof}[Proof for $0<\alpha<1$] \\
We first prove \eqref{lb_alpha} for pairs of finite discrete probability measures. \\
Let ${D_A^{(\alpha)}}^*:= \inf_{(P,Q)\in\set{P}_n \cap \set{P}[m_P, \sigma_P, m_Q, \sigma_Q]} D_A^{(\alpha)}(P\|Q)$, and suppose ${D_A^{(\alpha)}}^*<d_A^{(\alpha)}(r\|s)$.
By Lemma~\ref{minimum_condition} as $R\rightarrow\infty$ and Lemma~\ref{lem_p2}, there exist sequences of vectors $\{\bm{u}_j\}$, and probability measures $\{P_j\}$ and $\{Q_j\}$, which are defined on $\{\bm{u}_j\}$, such that
\begin{align}
D_A^{(\alpha)}(P_\infty\|Q_\infty)={D_A^{(\alpha)}}^*,
\end{align}
where $Z_\infty$ denotes $\lim_{j\rightarrow \infty} Z_j$ for variables $Z=\{P,Q, u_i\}$. Without any loss of generality, one can assume that $|u_{i,\infty}| < \infty$ for $1\leq i\leq I$ and $|u_{i,\infty}| = \infty$ for $I< i\leq n$. Let $\sum_{i>I} p_{i,\infty}u^2_{i,\infty}=C^2$ and  $\sum_{i> I} q_{i,\infty}u^2_{i,\infty}=D^2$, where $p_{i,j}=P_j(u_{i,j})$ and $q_{i,j}=Q_j(u_{i,j})$. By the variance constraints, we have $C^2\leq m_P^2+\sigma_P^2$ and $D^2\leq m_Q^2+\sigma_Q^2$. Hence, $p_{i,\infty}=O(u_{i,\infty}^{-2})$ and $q_{i,\infty}=O(u_{i,\infty}^{-2})$ hold for $i >I$, and due to $0<\alpha<1$, it follows that
\begin{align}
\label{eq_moment1}
\sum_{i>I} p_{i,\infty}=\sum_{i> I} p_{i,\infty}u_{i,\infty}=0, \\[0.1cm]
\label{eq_moment2}
\sum_{i>I} q_{i,\infty}=\sum_{i> I} q_{i,\infty}u_{i,\infty}=0, \\[0.1cm] 
\label{eq_moment3}
\sum_{i> I} p_{i,\infty}^\alpha q_{i,\infty}^{1-\alpha}=0.
\end{align}
Let $P'$ and $Q'$ be probability measures defined on $\{u_{1,\infty}, u_{2,\infty}, \cdots, u_{I,\infty}\}$, and let $P'(u_{i,\infty})=p_{i,\infty}$, $Q'(u_{i,\infty})=q_{i,\infty}$ for $1\leq i \leq I$.
From \eqref{eq_moment1}-\eqref{eq_moment3}, it follows that 
\begin{align}
(P',Q')&\in \set{P}_{I,R'} \cap \set{P}\Bigl[m_P, \sqrt{\sigma_P^2 -C^2}, m_Q, \sqrt{\sigma_Q^2 -D^2}\Bigr], \\
{D_A^{(\alpha)}}^*&=D_A^{(\alpha)}(P', Q'),
\end{align}
where we set $R' > \max_{i\leq I} |u_{i, \infty}|$. Since the variances of $P'$ and $Q'$ are non-negative, we have $0 \leq C^2\leq \sigma_P^2$, and $0 \leq D^2\leq \sigma_Q^2$. If ${D_A^{(\alpha)}}^*$ is not a global minimum in $\set{P}_{I,R'} \cap \set{P}\Bigl[m_P, \sqrt{\sigma_P^2 -C^2}, m_Q, \sqrt{\sigma_Q^2 -D^2}\Bigr]$, there exists a sequence in $\set{P}_n \cap \set{P}[m_P, \sigma_P, m_Q, \sigma_Q]$ such that $\sum_{i> I} p_{i,\infty}u^2_{i,\infty}=C^2$, $\sum_{i> I} q_{i,\infty}u^2_{i,\infty}=D^2$, which gives a smaller value than ${D_A^{(\alpha)}}^*$. It contradicts that ${D_A^{(\alpha)}}^*$ is global minimum in $\set{P}_{n} \cap \set{P}[m_P, \sigma_P, m_Q, \sigma_Q]$. Hence, by Lemma~\ref{lem_minimum}, it follows that $(P',Q')\in \set{P}_2$, and
\begin{align}
{D_A^{(\alpha)}}^*=D_A^{(\alpha)}(P'\|Q')= d_A^{(\alpha)}(r'\|s'),
\end{align}
where $(r', s')$ with means $(m_P,  m_Q)$ and variances $(\sigma_P^2 -C^2, \sigma_Q^2 -D^2)$.
By Lemma \ref{decreasing}, since $d_A^{(\alpha)}(r'\|s')$ is monotonically decreasing with respect to $\sigma_P$ and $\sigma_Q$, we have ${D_A^{(\alpha)}}^* = d_A^{(\alpha)}(r'\|s') \geq d_A^{(\alpha)}(r\|s)$. This contradicts the assumption of ${D_A^{(\alpha)}}^*<d_A^{(\alpha)}(r\|s)$, thus we obtain ${D_A^{(\alpha)}}^*=d_A^{(\alpha)}(r\|s)$.

Next, we prove \eqref{lb_alpha} for pairs of probability measures in $\set{P}[m_P,\sigma_P, m_Q, \sigma_Q]$. For sufficiently small $\epsilon$, there exists $R$ such that
\begin{align}
\label{approximate1}
|\int_{|x|>R} px^{k} \mathrm{d}\mu(x)|<\epsilon, \quad |\int_{|x|>R} qx^{k} \mathrm{d}\mu(x)|<\epsilon, \quad \mbox{for} \hspace*{0.15cm}  k=0,1,2.
\end{align}
Since $p^\alpha q^{1-\alpha}\leq p+q$, we have 
\begin{align}
\label{approximate2}
\int_{|x|>R}p^\alpha q^{1-\alpha} \mathrm{d}\mu(x) < 2\epsilon.
\end{align}
In the interval $[-R, R]$, one can approximate probability measures by finite discrete probability measures $(P_d, Q_d)\in\set{P}_{n, R}$ as follows.
\begin{align}
\label{approximate3}
&|\int_{|x|\leq R} px^{k} \mathrm{d}\mu(x)- \sum_i p_i u_i^{k}|<\epsilon, \quad \mbox{for} \hspace*{0.15cm}  k=0,1,2,  \\[0.1cm]
\label{approximate4}
&|\int_{|x|\leq R} qx^{k} \mathrm{d}\mu(x)- \sum_i q_i u_i^{k}|<\epsilon, \quad \mbox{for} \hspace*{0.15cm}  k=0,1,2, \\[0.1cm]
\label{approximate5}
&|\frac{1}{\alpha(\alpha-1)} \int_{|x|\leq R}  p^\alpha q^{1-\alpha} \mathrm{d}\mu(x)- D_A^{(\alpha)}(P_d\|Q_d)|<\epsilon.
\end{align}
From \eqref{approximate2} and \eqref{approximate5}, we have $D_A^{(\alpha)}(P\|Q)=D_A^{(\alpha)}(P_d\|Q_d)+O(\epsilon)$. From \eqref{approximate1}, \eqref{approximate3}, and \eqref{approximate4}, it follows that differences of means and variances between $(P,Q)$ and $(P_d, Q_d)$ are $O(\epsilon)$. 
By applying $D_A^{(\alpha)}(P_d, Q_d)\geq d_A^{(\alpha)}(r_d, s_d)$, it follows that
\begin{align}
D_A^{(\alpha)}(P\|Q)=D_A^{(\alpha)}(P_d, Q_d)+O(\epsilon)\geq d_A^{(\alpha)}(r_d,\|s_d)+O(\epsilon)=d_A^{(\alpha)}(r\|s)+O(\epsilon),
\end{align}
where $(r_d, s_d)\in\set{P}_2 \cap \set{P}[m_{P_d}, \sigma_{P_d}, m_{Q_d}, \sigma_{Q_d}]$, and we use differentiability of $d_A^{(\alpha)}(r\|s)$ with respect to means and variances.
Since one can choose arbitrary small $\epsilon$, we obtain $D_A^{(\alpha)}(P\|Q) \geq d_A^{(\alpha)}(r\|s)$.
\end{proof}

\begin{proof}[Proof for $-1\leq \alpha \leq 0$ and $1\leq\alpha\leq 2$] \\  
Let $(P, Q)\in \set{P}[m_P, \sigma_P, m_Q, \sigma_Q]$, and $(R, S)\in \set{P}_2 \cap \set{P}[m_P, \sigma_P, m_Q, \sigma_Q]$. For all $t\in[0,1]$, probability measures $Q_t=(Q-P)t+P$ and $S_t=(S-R)t+R$ have the same means and variances.
Hence, by the tight lower bounds for $0<\alpha+1< 1$, the relation \eqref{relation_alpha_3} for $-1<\alpha< 0$, and setting $t=1$ in \eqref{relation_alpha_3}, we obtain
\begin{align}
D_A^{(\alpha)}(P\|Q)=(\alpha+1)\int_0^1 t^{\alpha-2}D_A^{(\alpha+1)}(P\|Q_t)\mathrm{d}t \geq (\alpha+1)\int_0^1 t^{\alpha-2}d_A^{(\alpha+1)}(r\|s_t)\mathrm{d}t=d_A^{(\alpha)}(r\|s). \nonumber
\end{align}
The inequality for $1<\alpha<2$ follows due to the duality \eqref{eq-duality}.
Next, we prove for $\alpha=-1, 2$.
By the Hammersley–Chapman–Robbins bound~\cite{chapman1951minimum}, we obtain
\begin{align}
D_A^{(2)}(P\|Q)\geq \frac{a^2}{2\sigma_Q^2}. \nonumber
\end{align}
(see~\cite{nishiyama2020relations}[(180)]).
Since $s(1-s)=\frac{\sigma_Q^2}{4v^2}$ and $r-s=\frac{a}{2v}$ hold due to \eqref{r} and \eqref{s}, it follows that
\begin{align}
d_A^{(2)}(r\|s)=\frac 12\Bigl(\frac{(r-s)^2}{s}+\frac{(1-r-(1-s))^2}{1-s}\Bigr)=\frac{(r-s)^2}{2s(1-s)}=\frac{a^2}{2\sigma_Q^2}. \nonumber
\end{align}
By the duality \eqref{eq-duality}, we obtain inequalities for $\alpha=-1$.
The relation \eqref{relation_alpha_3} for $\alpha=1$ and the duality \eqref{eq-duality} yield inequalities for $\alpha=0, 1$ (see~\cite{nishiyama2020relations}[Theorem 2]).
By combining these results, we obtain lower bounds for $-1\leq \alpha \leq 0$ and $1\leq\alpha\leq 2$.
\end{proof}

\begin{proof}[Proof of the necessary condition] \\ 
Due to the duality $\eqref{eq-duality}$, it is enough to show for $\alpha < -1$. We show an example of $(P, Q)\in \set{P}_3 \cap \set{P}[m_P, \sigma_P, m_Q, \sigma_Q]$ such that $d_A^{(\alpha)}(r\|s) > D_A^{(\alpha)}(P\|Q)$. Let $(R', S')\in \set{P}_2 \cap \set{P}[m_P, \sigma_P, m_Q, 0]$, which is defined on $(u'_1, u'_2)$. We obtain  $s'=1$ due to $\sigma_Q=0$.
Let $(P, Q)\in \set{P}_3$ be 
\begin{align}
(P(u'_1), P(u'_2), P(u'_3))&=\Bigl(r'-\frac{1}{{u'_3}^{2+\delta}},\; 1-r', \;\frac{1}{{u'_3}^{2+\delta}}\Bigr), \nonumber \\ 
(Q(u'_1), Q(u'_2), Q(u'_3))&=\Bigl(1-\frac{\sigma_Q^2}{{u'_3}^{2}},\; 0,\; \frac{\sigma_Q^2}{{u'_3}^{2}}\Bigr),\nonumber
\end{align}
where $\delta$ is a small positive real number such that $2+\alpha\delta>0$.
As $u_3'\rightarrow\infty$, $P$ and $Q$ have means $(m_P, m_Q)$ and variances $(\sigma_P^2, \sigma_Q^2)$.
Since $P(u'_3)^\alpha Q(u'_3)^{1-\alpha}=O({u'_3}^{-(2+\alpha\delta)})$, it follows that $D_A^{(\alpha)}(P\|Q)\rightarrow d_A^{(\alpha)}(r'\|s')$. By Lemma~\ref{exception}, there exists $\sigma_Q$ such that $d_A^{(\alpha)}(r\|s)>D_A^{(\alpha)}(P\|Q)$.
\end{proof}
\begin{proof}[Proof of Item (c)] \\
Since the proof is similar to~\cite{nishiyama2020relations}[Theorem 2], we outline of the proof.  
We construct sequence of probability measures $\{(P_j, Q_j)\}$ with zero mean and respective variances $(\sigma_P^2, \sigma_Q^2)$ for which $D_A^{(\alpha)}(P_j\|Q_j)\rightarrow 0$ as $j\rightarrow \infty$
(without any loss of generality, one can assume that the equal means are equal to zero).
We start by assuming $\min\{\sigma_P^2, \sigma_Q^2\}\geq 1$.
Let 
\begin{align}
\mu_j:= \sqrt{1 + j(\sigma_Q^2-1)}, \nonumber
\end{align} 
and define a sequence of quaternary real-valued random variables with probability mass functions
\begin{align}
Q_j(u) := 
\begin{dcases}
\frac12-\frac{1}{2j}, & \quad u = \pm 1, \nonumber\\
\frac{1}{2j},       & \quad u = \pm \mu_j. \nonumber
\end{dcases}
\end{align}
It can be verified that, for all $j\in \naturals$, $Q_j$ has zero mean and variance $\sigma_Q^2$.

Furthermore, let
\begin{align}
P_j(u) := 
\begin{dcases}
\frac{1}{2}-\frac{\xi}{2j}, & \quad u= \pm 1, \nonumber \\
\frac{\xi}{2j},       & \quad u = \pm \mu_j, \nonumber
\end{dcases}
\end{align}
with
\begin{align}
\xi:= \frac{\sigma_P^2-1}{\sigma_Q^2-1}. \nonumber
\end{align}
If $\xi > 1$, for $j=1, \cdots, \lceil \xi \rceil$, we choose $P_j$ arbitrary with mean $0$ and variance $\sigma_P^2$.
Then,
\begin{align}
D_A^{(\alpha)}(P_j\|Q_j)=d_A^{(\alpha)}\Bigl(\frac{\xi}{j}\|\frac{1}{j}\Bigr)\rightarrow 0. \nonumber
\end{align}
Next, suppose $\min\{\sigma_P^2, \sigma_Q^2\}:= \sigma^2<1$, then construct $P'_j$ and $Q'_j$ as before with variances $\frac{2\sigma_P^2}{\sigma^2}>1$ and $\frac{2\sigma_Q^2}{\sigma^2}>1$, respectively. If $P_j$ and $Q_j$ denote the random variables $P'_j$ and $Q'_j$ scaled by a factor of $\frac{\sigma}{\sqrt{2}}$, then their variances are $\sigma_P^2, \sigma_Q^2$, respectively, and $D_A^{(\alpha)}(P_j,Q_j)=D_A^{(\alpha)}(P'_j,Q'_j)\rightarrow 0$ as we let $j\rightarrow \infty$.
\end{proof}
\section{Conclusion} 
In this paper, we derived relations between $\alpha$ and $\alpha+1$ for the asymmetric $\alpha $-divergences. These relations are generalizations of the integral relation between the Kullback-Leibler divergence and the $\chi^2$-divergence. We showed that $\alpha\in[-1,2]$ is the necessary and sufficient condition that the binary $\alpha$-divergences always attain lower bounds under given means and variances. Kullback-Leibler divergence, Hellinger distance, and $\chi^2$-divergence satisfy this condition. It is intuitively natural that discrepancy between probability measures under moment constraints is smaller for localized measures than for broad probability measures.
In this point of view, $\alpha$-divergences for $\alpha\in[-1,2]$ have preferable properties. 
The tight lower bounds for the Kullback-Leibler divergence and the Hellinger distance recently began to be applied to physics~\cite{hasegawa2021irreversibility, van2020unified}.
In the future, we hope that the range of applications of relations between the $\alpha$-divergences and tight lower bounds will expand, and we hope that they will help to progress many fields and to deepen the understanding of properties of divergences.

\bibliography{reference_Alpha_div} 

\begin{thebibliography}{10}

\bibitem{amari2016information}
S.-i. Amari.
\newblock {\em Information geometry and its applications}, volume 194.
\newblock Springer, 2016.

\bibitem{audenaert2014quantum}
K.~M. Audenaert.
\newblock Quantum skew divergence.
\newblock {\em Journal of Mathematical Physics}, 55(11):112202, 2014.

\bibitem{chapman1951minimum}
D.~G. Chapman, H.~Robbins, et~al.
\newblock Minimum variance estimation without regularity assumptions.
\newblock {\em The Annals of Mathematical Statistics}, 22(4):581--586, 1951.

\bibitem{cichocki2010families}
A.~Cichocki and S.-i. Amari.
\newblock Families of alpha-beta-and gamma-divergences: Flexible and robust
  measures of similarities.
\newblock {\em Entropy}, 12(6):1532--1568, 2010.

\bibitem{csiszar1967information}
I.~Csisz{\'a}r.
\newblock Information-type measures of difference of probability distributions
  and indirect observation.
\newblock {\em studia scientiarum Mathematicarum Hungarica}, 2:229--318, 1967.

\bibitem{csiszar1967topological}
I.~Csisz{\'a}r.
\newblock On topological properties of f-divergences.
\newblock {\em Studia Math. Hungar.}, 2:329--339, 1967.

\bibitem{csiszar1972class}
I.~Csisz{\'a}r.
\newblock A class of measures of informativity of observation channels.
\newblock {\em Periodica Mathematica Hungarica}, 2(1-4):191--213, 1972.

\bibitem{dashti2013bayesian}
M.~Dashti and A.~M. Stuart.
\newblock The bayesian approach to inverse problems.
\newblock {\em arXiv preprint arXiv:1302.6989}, 2013.

\bibitem{hasegawa2021irreversibility}
Y.~Hasegawa.
\newblock Irreversibility, loschmidt echo, and thermodynamic uncertainty
  relation.
\newblock {\em Physical Review Letters}, 127(24):240602, 2021.

\bibitem{hellinger1909neue}
E.~Hellinger.
\newblock Neue begr{\"u}ndung der theorie quadratischer formen von
  unendlichvielen ver{\"a}nderlichen.
\newblock {\em Journal f{\"u}r die reine und angewandte Mathematik (Crelles
  Journal)}, 1909(136):210--271, 1909.

\bibitem{katsoulakis2017scalable}
M.~A. Katsoulakis, L.~Rey-Bellet, and J.~Wang.
\newblock Scalable information inequalities for uncertainty quantification.
\newblock {\em Journal of Computational Physics}, 336:513--545, 2017.

\bibitem{kullback1951information}
S.~Kullback and R.~A. Leibler.
\newblock On information and sufficiency.
\newblock {\em The annals of mathematical statistics}, 22(1):79--86, 1951.

\bibitem{liese2006divergences}
F.~Liese and I.~Vajda.
\newblock On divergences and informations in statistics and information theory.
\newblock {\em IEEE Transactions on Information Theory}, 52(10):4394--4412,
  2006.

\bibitem{8919677}
J.~Melbourne, M.~Madiman, and M.~V. Salapaka.
\newblock Relationships between certain f -divergences.
\newblock In {\em 2019 57th Annual Allerton Conference on Communication,
  Control, and Computing (Allerton)}, pages 1068--1073, 2019.

\bibitem{nishiyama2020tight}
T.~Nishiyama.
\newblock A tight lower bound for the hellinger distance with given means and
  variances.
\newblock {\em arXiv preprint arXiv:2010.13548}, 2020.

\bibitem{nishiyama2020relations}
T.~Nishiyama and I.~Sason.
\newblock On relations between the relative entropy and $\chi$2-divergence,
  generalizations and applications.
\newblock {\em Entropy}, 22(5):563, 2020.

\bibitem{pearson1900x}
K.~Pearson.
\newblock On the criterion that a given system of deviations from the probable
  in the case of a correlated system of variables is such that it can be
  reasonably supposed to have arisen from random sampling.
\newblock {\em The London, Edinburgh, and Dublin Philosophical Magazine and
  Journal of Science}, 50(302):157--175, 1900.

\bibitem{renyi1961measures}
A.~R{\'e}nyi et~al.
\newblock On measures of entropy and information.
\newblock In {\em Proceedings of the Fourth Berkeley Symposium on Mathematical
  Statistics and Probability, Volume 1: Contributions to the Theory of
  Statistics}. The Regents of the University of California, 1961.

\bibitem{van2020unified}
T.~Van~Vu, Y.~Hasegawa, et~al.
\newblock Unified approach to classical speed limit and thermodynamic
  uncertainty relation.
\newblock {\em Physical Review E}, 102(6):062132, 2020.

\end{thebibliography}
\bibliographystyle{myplain} 

\appendices
\section{Proof of Lemma~\ref{minimum_condition}}
\label{ap_minimum_condition}
\begin{proof}
Let $n\geq 3$ ($n=1,2$ are trivial), and let $p_i:= P(u_i)$, $q_i:= Q(u_i)$. 
Consider the following minimization problem.
\begin{align}
\label{eq_min_problem}
&\mbox{minimize} \quad - \sum_i p_i^\alpha q_i^{1-\alpha}, \\[0.1cm]
\label{eq_constraint1}
\quad \mbox{subject to}\quad &g_k(\bm{p}, \bm{u}):= \sum_i p_i u_i^{k-1} -A_k=0, \\[0.1cm] 
\label{eq_constraint2}
& g_{k+3}(\bm{q}, \bm{u}):= \sum_i q_i u_i^{k-1} -B_k=0, \quad \mbox{for} \hspace*{0.15cm}  k=1,2,3,\\[0.1cm]
\label{eq_constraint3}
&0\leq p_i \leq 1, \quad 0\leq q_i \leq 1, \quad |u_i|\leq R, \quad \mbox{for} \hspace*{0.15cm}  1\leq i\leq n,
\end{align}
where $\bm{A}:= (1, m_P, \sigma_P^2 + m_P^2)^\mathrm{T}$, $\bm{B}:= (1, m_Q, \sigma_Q^2 + m_Q^2)^\mathrm{T}$, and \eqref{eq_constraint1} and \eqref{eq_constraint2} correspond to \eqref{constraints}.
Since the feasible set is compact, there exists a global minimum.
We first consider $\sigma_P > 0$ and $\sigma_Q > 0$, then we have $p_i ,q_i< 1$ for $1\leq i\leq n$.
Hence, the global minimum point must be a stationary point, or be on the boundary at $\max_i |u^*_i| = R$, $p^*_i=0$, or $q^*_i=0$. By rearranging the order of $\{p^*_i\}, \{q^*_i\}$ appropriately, let
\begin{align}
p^*_i>0,  \; q^*_i &> 0, \quad \mbox{for} \hspace*{0.15cm} i\leq I, \nonumber \\[0.1cm]
p^*_j>0,  \; q^*_j &= 0, \quad \mbox{for} \hspace*{0.15cm} I < j\leq I+J,  \nonumber \\[0.1cm]
p^*_k=0,  \; q^*_k &> 0, \quad \mbox{for} \hspace*{0.15cm} I+J < k\leq I+J+K=n. \nonumber
\end{align}
Since $- \sum_{i\leq I} p_i^\alpha q_i^{1-\alpha}<0$ holds for the global minimum point, we have $I\geq 1$.
If $J>0$, let $(P', Q', \bm{u'})$ be 
\begin{align}
&p'_i=p^*_i+\dif p_i,  \quad q'_i=q^*_i+\dif q_i, \quad \mbox{for} \hspace*{0.15cm} 1\leq i\leq I, \nonumber\\[0.1cm]
&p'_{I+1}=p^*_{I+1}+\dif p_{I+1}, \quad q'_{I+1}=\epsilon >0, \nonumber\\[0.1cm]
&p'_j=p^*_j+\dif p_j,  \quad q'_j=0, \quad \mbox{for} \hspace*{0.15cm} I+1 < j\leq I+J, \nonumber\\[0.1cm]
&p'_k=0,  \quad q'_k=q^*_k+\dif q_k,  \quad \mbox{for} \hspace*{0.15cm} I+J < k\leq I+J+K=n, \nonumber\\[0.1cm]
&u'_i=u^*_i+\dif u_i,  \quad \mbox{for} \hspace*{0.15cm} i\leq n, \nonumber
\end{align}
where $\epsilon$ and $\dif \cdot$ denote sufficiently small real numbers.
The moment constraints for probability measures $P'$ and $Q'$ include $\{\dif p_1, \dif p_{I+1}, \dif u_{I+1}\}$ and $\{\dif q_1, \dif q_{l}, \dif u_{l}\}$ ($l\neq I+1$), respectively.
These variables are independent, and it can be easily verified that the determinants for each probability measure constraint are non-zero by $u^*_i\neq u^*_j$ for $i\neq j$. Hence, one can choose $\dif \cdot$ such that $(P', Q')\in \set{P}[m_P, \sigma_P, m_Q, \sigma_Q]$ and $\dif \cdot = O(\epsilon)$.
By $0<\alpha<1$ and $p^*_{I+1}>0$, 
\begin{align}
-\sum_{i\leq I} {p^*}_i^\alpha {q^*}_i^{1-\alpha}-\Bigl(-\sum_{i\leq I} {p'}_i^\alpha {q'}_i^{1-\alpha}-{p'}_{I+1}^\alpha {q'}_{I+1}^{1-\alpha}\Bigr)= \epsilon^{1-\alpha} {p^*_{I+1}}^{\alpha} + O(\epsilon) > 0.
\end{align} 
This contradicts that $(\bm{p}^*, \bm{q}^*, \bm{u}^*)$ is a global minimum, then we obtain $J=0$.
In a similar way, we also obtain $K=0$ and $I=n$. Hence, the global minimum point is an interior point or on the the boundary at $\max_i |u^*_i| = R$.
Supposing $\max_i |u^*_i| < R$, it must be a stationary point of the following Lagrangian.
\begin{align}
L(\bm{p}, \bm{q}, \bm{u}, \boldsymbol{\lambda}):=- \sum_{i\leq n} p_i^\alpha q_i^{1-\alpha}+ \sum_{i\leq n} p_i\phi_{\boldsymbol{\lambda}}(u_i) + \sum_{i\leq n} q_i\psi_{\boldsymbol{\lambda}}(u_i)-\sum_{k=1}^3 \lambda_k A_k-\sum_{k=1}^3 \lambda_{k+3}B_k,
\end{align}
where $\phi_{\boldsymbol{\lambda}}(u):= \sum_{k=1}^3 \lambda_k u^{k-1}$ and  $\psi_{\boldsymbol{\lambda}}(u):= \sum_{k=1}^3 \lambda_{k+3} u^{k-1}$.
Since $u_i\neq u_j$ for $i\neq j$, and $n\geq 3$, it follows that $\{\nabla g_k\}_{k\leq 6}$ are linearly independent. 
The stationary conditions are 

\begin{align}
\label{grad_1}
\frac{\partial{L}}{\partial{p_i}}(\bm{p}^*, \bm{q}^*, \bm{u}^*, \boldsymbol{\lambda}^*)&= -\alpha{\Bigl(\frac{p^*_i}{q^*_i}\Bigr)}^{\alpha-1}+ \phi_{\boldsymbol{\lambda}^*}(u^*_i)=0, \\[0.1cm]
\label{grad_2}
\frac{\partial{L}}{\partial{q_i}}(\bm{p}^*, \bm{q}^*, \bm{u}^*, \boldsymbol{\lambda}^*)&= -(1-\alpha){\Bigl(\frac{p^*_i}{q^*_i}\Bigr)}^\alpha+ \psi_{\boldsymbol{\lambda}^*}(u^*_i)=0, \\[0.1cm]
\label{grad_3}
\frac{\partial{L}}{\partial{u_i}}(\bm{p}^*, \bm{q}^*, \bm{u}^*, \boldsymbol{\lambda}^*)&= p^*_i\phi'_{\boldsymbol{\lambda}^*}(u^*_i) +  q^*_i\psi'_{\boldsymbol{\lambda}^*}(u^*_i)=0,
\end{align}
where $'$ denotes the derivative with respect to $u$.

From \eqref{grad_1} and \eqref{grad_2}, it follows that
\begin{align}
\label{eq_condition1}
\psi_{\boldsymbol{\lambda}^*}(u^*_i)=(1-\alpha){\Bigl(\frac{\phi_{\boldsymbol{\lambda}^*}(u^*_i)}{\alpha}\Bigr)}^{\frac{\alpha}{\alpha-1}}.
\end{align}
Substituting \eqref{grad_1} and \eqref{grad_3} into \eqref{grad_2}, we have
\begin{align}
\label{eq_condition2}
\alpha\phi'_{\boldsymbol{\lambda}^*}(u^*_i)\psi_{\boldsymbol{\lambda}^*}(u^*_i)+(1-\alpha)\phi_{\boldsymbol{\lambda}^*}(u^*_i)\psi'_{\boldsymbol{\lambda}^*}(u^*_i)=0.
\end{align}
Since $\phi_{\boldsymbol{\lambda}^*}(u^*)$ and $\psi_{\boldsymbol{\lambda}^*}(u^*)$ are positive from \eqref{grad_1} and \eqref{grad_2}, one can define the function $\beta(u):=\alpha\log\frac{\phi_{\boldsymbol{\lambda}^*}(u)}{\alpha}+(1-\alpha)\log\frac{\psi_{\boldsymbol{\lambda}^*}(u)}{1-\alpha}$.
From  \eqref{eq_condition1} and \eqref{eq_condition2}, we obtain
\begin{align}
\label{eq_condition3}
\beta(u^*_i)=\beta'(u^*_i)=0.
\end{align}
Since $\phi_{\boldsymbol{\lambda}^*}(u)$ and $\psi_{\boldsymbol{\lambda}^*}(u)$ are at most quadratic functions with respect to $u$, the equation \eqref{eq_condition2} has at most a degree of $3$. If \eqref{eq_condition2} is not identically $0$, we have $n\leq 3$.
If $n=3$, by the relation \eqref{eq_condition3} and the mean value theorem, there exists $u_l \neq u^*_1, u^*_2, u^*_3$ such that $\beta'(u_l)=0$.
By the definition of $\beta(\cdot)$, it follows that $u_l$ is a solution of \eqref{eq_condition2}.
This contradicts that the equation \eqref{eq_condition2} has at most a degree of $3$.
If equation \eqref{eq_condition2} is identically $0$, the equality \eqref{eq_condition1} is identity. Since the degree of $\phi_{\boldsymbol{\lambda}^*}(u)$ and $\psi_{\boldsymbol{\lambda}^*}(u)$ are any of $\{0,\;1, \;2\}$, and due to $0<\alpha<1$, it follows that they are constant.
The equality \eqref{grad_1} yields $q_i = Cp_i$ for all $i$ and a constant $C$. This implies $p_i=q_i$ for all $i$, and it contradicts $m_P\neq m_Q$.
Hence, we obtain $n\leq 2$.

Next, we show the case for $\sigma_P>0$ and $\sigma_Q=0$ briefly. The notation is the same as above, and $I=1$ and $J=n-1$ hold due to $\sigma_Q=0$.
The Lagrangian is
\begin{align}
L(\bm{p}, \bm{u}, \boldsymbol{\lambda}):=-p_1^\alpha+ \sum_{i\leq n} p_i\phi_{\boldsymbol{\lambda}}(u_i) -\sum_{k=1}^3 \lambda_k A_k.
\end{align}
Supposing $\max_i |u^*_i| < R$, the stationary conditions are
\begin{align}
\label{grad_8}
\frac{\partial{L}}{\partial{p_1}}(\bm{p}^*, \bm{u}^*, \boldsymbol{\lambda}^*)&= -\alpha {p^*_1}^{\alpha-1}+ \phi_{\boldsymbol{\lambda}^*}(u^*_1)=0, \\[0.1cm]
\label{grad_9}
\frac{\partial{L}}{\partial{p_i}}(\bm{p}^*, \bm{u}^*, \boldsymbol{\lambda}^*)&= \phi_{\boldsymbol{\lambda}^*}(u^*_i)=0, \quad \mbox{for} \hspace*{0.15cm}  2\leq i\leq n,\\[0.1cm]
\label{grad_10}
\frac{\partial{L}}{\partial{u_i}}(\bm{p}^*, \bm{u}^*, \boldsymbol{\lambda}^*)&= p^*_i\phi'_{\boldsymbol{\lambda}^*}(u^*_i) =0, \quad \mbox{for} \hspace*{0.15cm}  1\leq i\leq n.
\end{align}
From \eqref{grad_8} and \eqref{grad_9}, it follows that $\phi_{\boldsymbol{\lambda}^*}(u^*_i)$ is not a constant. Since $\phi'_{\boldsymbol{\lambda}^*}(u)$ is at most a dgree of 1 from \eqref{grad_10} and $p^*_i>0$ for $2\leq i\leq n$, it follows that $n\leq 2$. The result for $\sigma_P=0$ and $\sigma_Q>0$ also follows by swapping $P$ and $Q$.
\end{proof}

\section{Proof of Lemma~\ref{decreasing}}
\label{ap_decreasing}
We first prove the following lemma.
\begin{lemma} \label{alpha_inequality}
Let $0< \alpha<1$.
If $0< x \leq 1$, 
\begin{align}
\frac{(1-\alpha x)(1+x)^\alpha}{(1+\alpha x)(1-x)^\alpha}> 1. \nonumber
\end{align}
If $-1\leq x <0$, 
\begin{align}
\frac{(1-\alpha x)(1+x)^\alpha}{(1+\alpha x)(1-x)^\alpha}< 1.  \nonumber
\end{align}
\end{lemma}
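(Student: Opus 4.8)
The plan is to pass to logarithms and reduce the claim to a single monotonicity statement. Define, for $x\in(-1,1)$,
\begin{align}
g(x) := \log(1-\alpha x) + \alpha\log(1+x) - \log(1+\alpha x) - \alpha\log(1-x), \nonumber
\end{align}
so that the ratio appearing in the statement equals $e^{g(x)}$. Since $g(0)=0$, it suffices to show that $g$ is strictly increasing on $(-1,1)$: this immediately yields $g(x)>0$ (ratio $>1$) for $0<x<1$ and $g(x)<0$ (ratio $<1$) for $-1<x<0$, which are exactly the two open-interval cases.

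First I would differentiate and simplify. Collecting the two terms whose argument involves $\alpha x$ and the two whose argument involves $x$ separately, the cross terms combine to give
\begin{align}
g'(x) = 2\alpha\Bigl(\frac{1}{1-x^2} - \frac{1}{1-\alpha^2 x^2}\Bigr). \nonumber
\end{align}
For $0<\alpha<1$ and $x\ne 0$ we have $\alpha^2 x^2 < x^2$, hence $0 < 1-x^2 < 1-\alpha^2 x^2$ throughout $(-1,1)$, so the bracket is strictly positive; together with $2\alpha>0$ this gives $g'(x)>0$ for every $x\in(-1,1)\setminus\{0\}$. A continuous function with positive derivative off a single point is strictly increasing, so the sign of $g$ on the two open intervals follows from $g(0)=0$ as above.

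Finally I would dispose of the closed endpoints $x=\pm 1$ by direct inspection rather than through $g$. At $x=1$ the factor $(1-x)^\alpha$ in the denominator vanishes while the numerator $(1-\alpha)\,2^\alpha$ is positive (using $\alpha<1$), so the ratio is $+\infty>1$; at $x=-1$ the factor $(1+x)^\alpha$ in the numerator vanishes, so the ratio is $0<1$. This covers the endpoints in the two cases of the statement.

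There is no genuine obstacle here: the only step requiring care is the algebraic simplification of $g'(x)$, where the $\alpha x$ contributions must collapse into $-2/(1-\alpha^2 x^2)$ and the $x$ contributions into $2/(1-x^2)$. Once this is carried out, the elementary comparison $1-x^2 < 1-\alpha^2 x^2$ makes the sign of $g'$ transparent and the rest of the argument is immediate.
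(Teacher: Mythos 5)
Your proposal is correct and follows essentially the same route as the paper: take the logarithm of the ratio, show its derivative equals $2\alpha\bigl(\tfrac{1}{1-x^2}-\tfrac{1}{1-\alpha^2x^2}\bigr)>0$ (the paper writes this as the single fraction $\tfrac{2\alpha(1-\alpha^2)x^2}{(1-x^2)(1-\alpha^2x^2)}$), and conclude from $g(0)=0$. Your separate treatment of the endpoints $x=\pm 1$, where the derivative formula degenerates, is a small extra care the paper glosses over, but the argument is otherwise identical.
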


\begin{proof}
Letting $F(x):= \log\frac{(1-\alpha x)(1+x)^\alpha}{(1+\alpha x)(1-x)^\alpha} $, for $0<|x|\leq 1$ and $0<\alpha < 1$, we have
\begin{align}
F'(x)=\alpha\Bigl(\frac{1}{1+x}+\frac{1}{1-x}-\frac{1}{1-\alpha x}-\frac{1}{1+\alpha x}\Bigr)= \frac{2\alpha (1-\alpha^2)x^2}{(1-x^2)(1-\alpha^2x^2)}> 0. \nonumber
\end{align}
By combining this relation with $F(0)=0$, the results follow.
\end{proof}

\begin{proof} [Proof of Lemma~\ref{decreasing}]
Let $V_P:= \sigma_P^2$ and $V_Q:= \sigma_Q^2$, and $f^{(\alpha)}(V_P,V_Q,a):= r^\alpha s^{1-\alpha}$.
From \eqref{r}-\eqref{v}, by replacing $a$ by $-a$, we obtain $(r,s)\rightarrow (1-r, 1-s)$.
Hence, the binary $\alpha$-divergences are written by $d_A^{(\alpha)}(r\|s)=\frac{1}{\alpha(\alpha-1)}(f^{(\alpha)}(V_P,V_Q,a)+f^{(\alpha)}(V_P,V_Q,-a)-1)$.
Since $0<\alpha<1$, we show that $f^{(\alpha)}(V_P,V_Q,a)+f^{(\alpha)}(V_P,V_Q,-a)$ is monotonically increasing with respect to $V_P$ and $V_Q$.

From \eqref{r}-\eqref{v}, we have
\begin{align}
\label{dif_v}
\frac{\partial v}{\partial V_Q}&=\frac{V_Q-V_P+a^2}{4a^2v}, \\[0.1cm]
\label{dif_r}
\frac{\partial r}{\partial V_Q}&=-\frac{(V_Q-V_P+a^2)^2}{16a^3v^3}+\frac{1}{4av}=\frac{V_P}{4av^3}, \\[0.1cm]
\label{dif_s}
\frac{\partial s}{\partial V_Q}&=-\frac{(V_Q-V_P+a^2)(V_Q-V_P-a^2)}{16a^3v^3}+\frac{1}{4av}=\frac{V_P+V_Q+a^2}{8av^3}.
\end{align}
By combining \eqref{dif_r} and \eqref{dif_s}, it follows that
\begin{align}
\label{dif_f}
\frac{\partial f^{(\alpha)}}{\partial V_Q}=\frac{1}{8av^3}{\Bigl(\frac{r}{s}\Bigr)}^\alpha\Bigr((1-\alpha)(V_P+V_Q+a^2)+2\alpha V_P\frac{s}{r}\Bigr).
\end{align}
By $r-s=\frac{a}{2v}$ and $s(1-s)=\frac{V_Q}{4v^2}$, we obtain
\begin{align}
\label{rs}
\frac{r}{s}=1+\frac{a}{2vs}=1+\frac{2va(1-s)}{V_Q}=\frac{V_P+V_Q+a^2+2va}{2V_Q}.
\end{align}
By replacing $(V_P, V_Q, a)$ by $(V_Q, V_P, -a)$, we obtain $(r,s)\rightarrow (s,r)$.
From \eqref{rs}, we have 
\begin{align}
\label{sr}
\frac{s}{r}=\frac{V_P+V_Q+a^2-2va}{2V_P}.
\end{align}
Substituting \eqref{rs} and \eqref{sr} into \eqref{dif_f}, we obtain
\begin{align}
\label{dif_f2}
\frac{\partial f^{(\alpha)}}{\partial V_Q}(V_P,V_Q,a)&=\frac{1}{8av^3}{\Bigl(\frac{V_P+V_Q+a^2+2va}{2V_Q}\Bigr)}^\alpha\Bigr((V_P+V_Q+a^2)-2\alpha va\Bigr) \\
&=\frac{1}{8av^3}{\Bigl(\frac{V_P+V_Q+a^2}{2V_Q}\Bigr)}^\alpha(V_P+V_Q+a^2)(1+x)^\alpha(1-\alpha x),
\end{align}
where $x:=\frac{2va}{V_P+V_Q+a^2}$ and $|x|\leq 1$.
Hence, we obtain 
\begin{align}
\label{dif_f3}
&\frac{\partial\Bigl(f^{(\alpha)}(V_P,V_Q,a)+f^{(\alpha)}(V_P,V_Q,-a)\Bigr)}{\partial V_Q} \nonumber \\
&=\frac{1}{8av^3}{\Bigl(\frac{V_P+V_Q+a^2}{2V_Q}\Bigr)}^\alpha(V_P+V_Q+a^2)\Bigl((1+x)^\alpha(1-\alpha x)-(1-x)^\alpha(1+\alpha x)\Bigr).
\end{align}
If $a>0$, by Lemma~\ref{alpha_inequality}, \eqref{dif_f3} and $0< x \leq 1$, it follows that 
\begin{align}
\label{dif_f4}
\frac{\partial\Bigl(f^{(\alpha)}(V_P,V_Q,a)+f^{(\alpha)}(V_P,V_Q,-a)\Bigr)}{\partial V_Q}> 0.
\end{align}

If $a<0$, we obtain \eqref{dif_f4} due to $-1\leq x<0$ and Lemma~\ref{alpha_inequality}.
By \eqref{r}, \eqref{s}, and the definition of $f^{(\alpha)}(V_P,V_Q,a)$, we have $f^{(\alpha)}(V_P,V_Q,a)= f^{(1-\alpha)}(V_Q,V_P,-a)$.
Hence, by combinig this relation with \eqref{dif_f4} for $1-\alpha$, we have $\frac{\partial\Bigl(f^{(\alpha)}(V_P,V_Q,a)+f^{(\alpha)}(V_P,V_Q,-a)\Bigr)}{\partial V_P}> 0$.
\end{proof}

\section{Proof of Lemma~\ref{exception}}
\begin{proof}
\label{ap_exception}
Let $\alpha < -1$.
In a similar way to the proof of Lemma~\ref{decreasing}, we obtain \eqref{dif_f3}.
For $a>0$, let 
\begin{align}
x(V_Q)=\frac{2va}{V_P+V_Q+a^2}, \nonumber
\end{align}
where we write $V_Q$ explicitly in $x$ in the proof of Lemma~\ref{decreasing}.
By $x(0)=1$, there exists $V_Q=\sigma_Q^2$ such that $x(z)>0$ and $1+\alpha x(z) < 0$ for all  $z\in[0, V_Q]$. 
By combining these inequalities with $|x(V_Q)|\leq 1$, it follows that \eqref{dif_f3} is positive for all $z\in[0, V_Q]$. By $d_A^{(\alpha)}(r\|s)=\frac{1}{\alpha(\alpha-1)}(f^{(\alpha)}(V_P,V_Q,a)+f^{(\alpha)}(V_P,V_Q,-a)-1)$ and $\alpha(\alpha-1)>0$, it follows that
\begin{align}
d_A^{(\alpha)}(r\|s) > d_A^{(\alpha)}(r'\|s'), 
\end{align}
where $(R', S')\in \set{P}_2 \cap \set{P}[m_P, \sigma_P, m_Q, 0]$. The case for $a<0$ can be justified in a similar way.

\end{proof}
\end{document}